\documentclass[a4paper,fleqn]{cas-sc}

\usepackage[numbers]{natbib}
\usepackage{amsthm}
\usepackage{pgfplots}
\pgfplotsset{compat=1.7}
\usepackage{subfig}
\usepackage{mleftright} 
\mleftright % redefine \left as \mleft and \right as \mright.
\usepackage[linesnumbered,ruled,lined]{algorithm2e} %algpseudocode
\DontPrintSemicolon

\def\tsc#1{\csdef{#1}{\textsc{\lowercase{#1}}\xspace}}
\tsc{WGM}
\tsc{QE}

\SetKwComment{Comment}{$\triangleright$\ }{}

\newcommand{\bigO}{\mathcal{O}}
\newcommand{\fun}[1]{\textsc{#1}}
\newcommand{\longprogressions}{\fun{LongestArithProg*}}
\newcommand{\isIMAP}{\fun{IsIMAP}}
\newcommand{\isIMAPprime}{\fun{IsIMAP*}}
\newcommand{\extend}{\fun{Extend}}

\newcommand{\factors}{\fun{Factors}}
\newcommand{\factorsprime}{\fun{Factors*}}

\newcommand{\apt}{\ensuremath{[t_0,d,k]}}
\newcommand{\Sap}{\ensuremath{S_{\text{ap}} }}
\newcommand{\Sapm}{\ensuremath{S_{\text{ap}}^{\text{max}} }}

\newcommand{\seq}{\ensuremath{(s_i)}}

\newcommand{\nlast}{\refstepcounter{AlgoLine}\nlset{\theAlgoLine\rlap{*}}}
\let\oldnl\nl
\newcommand{\nlnonumber}{\renewcommand{\nl}{\let\nl\oldnl}}
\newtheorem{theorem}{Theorem}
\newtheorem{lemma}[theorem]{Lemma}
\newdefinition{rmk}{Remark}
\newproof{pf}{Proof}
\newcommand{\proofofref}{}
\newproof{zproofof}{Proof of \proofofref}
\newenvironment{proofof}[1]
 {\renewcommand{\proofofref}{#1}\zproofof}
 {\endzproofof}

\newtheorem{corollary}{Corollary}[theorem]
\newtheorem{proposition}{Proposition}[theorem]
\begin{document}
\let\WriteBookmarks\relax
\def\floatpagepagefraction{1}
\def\textpagefraction{.001}

% Short title
\shorttitle{Enumerating IMAPs}    

% Short author
\shortauthors{Bemman et al.}  

% Main title of the paper
\title [mode = title]{Enumerating Inclusion-Maximal Arithmetic Progressions}  

% Title footnote mark
% eg: \tnotemark[1]
%\tnotemark[1] 

% Title footnote 1.
% eg: \tnotetext[1]{Title footnote text}
%\tnotetext[1]{} 

% First author
%
% Options: Use if required
% eg: \author[1,3]{Author Name}[type=editor,
%       style=chinese,
%       auid=000,
%       bioid=1,
%       prefix=Sir,
%       orcid=0000-0000-0000-0000,
%       facebook=<facebook id>,
%       twitter=<twitter id>,
%       linkedin=<linkedin id>,
%       gplus=<gplus id>]

\author[1]{Brian Bemman}%[<options>]

% Corresponding author indication
\cormark[1]

% Footnote of the first author
\fnmark[1]

% Email id of the first author
\ead{brian.m.bemman@durham.ac.uk}

% URL of the first author
%\ead[url]{}

% Credit authorship
% eg: \credit{Conceptualization of this study, Methodology, Software}

% Address/affiliation
\affiliation[1]{organization={Department of Computer Science, Durham University},
            addressline={Stockton Rd}, 
            city={Durham},
%          citysep={}, % Uncomment if no comma needed between city and postcode
            postcode={DH1 3LE}, 
            %state={},
            country={United Kingdom}}

\author[1]{Maximilien Gadouleau}%[]

% Footnote of the second author
%\fnmark[2]

% Email id of the second author
\ead{m.r.gadouleau@durham.ac.uk}

% URL of the second author
%\ead[url]{}

% Credit authorship
%\credit{}

\author[2]{Oliver W. Gnilke}%[]

% Footnote of the third author
%\fnmark[3]

% Email id of the third author
\ead{owg@math.aau.dk}

% URL of the second author
%\ead[url]{}

% Credit authorship
%\credit{}

% Address/affiliation
\affiliation[2]{organization={Department of Mathematical Sciences, Aalborg University},
            addressline={Thomas Manns Vej 23}, 
            city={Aalborg Øst},
            citysep={}, % Uncomment if no comma needed between city and postcode
            postcode={9220},
            %state={},
            country={Denmark}}

\author[1]{George B. Mertzios}%[]

% Footnote of the fourth author
%\fnmark[4]

% Email id of the fourth author
\ead{george.mertzios@durham.ac.uk}

% URL of the second author
%\ead[url]{}

% Credit authorship
%\credit{}

% Corresponding author text
\cortext[1]{Corresponding author}

% Footnote text
%\fntext[1]{}

% For a title note without a number/mark
%\nonumnote{}

% Here goes the abstract
\begin{abstract}
We present a simple $\bigO\left( n^2 \frac{ \log N }{ \log \log N } + N \right)$ enumeration algorithm for solving a problem from mathematical and computational music analysis where, given a strictly increasing integer sequence, $S$, with $n$ entries and maximum value $N$, the task is to enumerate all $m$ \textit{inclusion-maximal arithmetic progressions (IMAPs)} in this sequence. An IMAP is a subsequence, $S' \subseteq S$ with $k>2$ integers, in which (i) the difference between any two consecutive integers is the same number, $d$ (i.e., $S'$ is an \textit{arithmetic progression}), (ii) $S'$ cannot be further extended to the left or to the right with any additional integers from $S$ while still remaining an arithmetic progression (i.e., $S'$ is a \textit{maximal} arithmetic progression), and (iii) there is no other maximal arithmetic progression, $S'' \subseteq S$, which \textit{properly} contains $S'$ (i.e., $S'$ is an \textit{inclusion-maximal} arithmetic progression). We further provide proofs for the expected number of IMAPs in random integer sequences, $S$, and a bound on their order of growth. Finally, we provide empirical experiments comparing both (a)~the practical running time performance of the proposed algorithm against that of a previously known  algorithm which has higher time complexity $\bigO(N^{2+o(1)}n)$, and (b)~the actual enumerated number of IMAPs to that of their mathematically expected number. Notably, the proposed algorithm demonstrates a significant improvement in running time over the previously known algorithm, and in immediate practical applications, will allow for more efficient analysis of large and rhythmically complex musical pieces.
\end{abstract}

% Use if graphical abstract is present
%\begin{graphicalabstract}
%\includegraphics{}
%\end{graphicalabstract}

% Research highlights
%\begin{highlights}
%\end{highlights}

% Keywords
% Each keyword is seperated by \sep
\begin{keywords}
arithmetic progressions \sep enumeration algorithm \sep expected number \sep asymptotic bounds \sep combinatorics \sep mathematical music analysis
\end{keywords}

\maketitle

% Main text
%%%%%%%%%%%%%%%%%%%%%%%%%%%%%%%%%%%%%%%%%%%%%%
\section{Introduction}

Let $S = \langle s_0, s_1 \ldots, s_{n-1} \rangle$ be a strictly increasing sequence of positive integers. An \textit{inclusion-maximal arithmetic progression (IMAP)} \cite{Nestke2001} is a subsequence, $t_0, \dots, t_{k-1}$, of non-trivial length, $k>2$, such that 

\begin{enumerate}
    \item $t_{j+1} = t_j + d$ for $0 \le j < k-1$ where $d>0$ is the constant \textit{common difference} between any two consecutive integers (\textit{arithmetic progression});
    \item neither $t_0 - d$ nor $t_{k-1} + d$ are elements of $S$ (\textit{maximal} arithmetic progression); and
    \item it is not \textit{properly} contained within any other subsequence of $S$ that satisfies (1) and (2) (\textit{inclusion-maximal} arithmetic progression).
\end{enumerate}

\noindent By way of an example, consider the following sequence, $S=\langle 1,2,3,4,5,6,8\rangle$, which has exactly three IMAP subsequences, namely, $S'_1=\langle 1,2,3,4,5,6\rangle, \, S'_2=\langle 2,4,6,8\rangle$, and $S'_3=\langle 2,5,8\rangle$. Note that while the subsequence, $\langle 1,3,5\rangle$, of $S$ is a maximal arithmetic progression (MAP) as it cannot be further extended, it is not \textit{inclusion-maximal} because it is contained within another IMAP subsequence, namely, $S'_1=\langle \underline{\textbf{1}},2,\underline{\textbf{3}},4,\underline{\textbf{5}},6\rangle$. While (1) and (3) above are sufficient to define an IMAP, (2) is necessary to distinguish between two closely related problems in computing: enumerating all MAPs (i.e., (1) and (2) above), which forms the basis of our proposed algorithm, and enumerating all IMAPs, which is a special case of the former. It is important to also note that in enumerating IMAPs, we are concerned here with \textit{all} IMAPs rather than the single longest arithmetic progression.
%which form a subset of all MAPs

Computational and mathematical methods have long been applied to music (e.g., use of the AllDifferent global constraint from constraint programming to solve the all-interval series problem in music \cite{Anders2011}), while equally, problems originating from music have proved novel within computational and mathematical domains (e.g., special cases of set-covering from music analysis and composition \cite{Tanaka2016a,Tanaka2016b,Bemman2019}). IMAPs are a clear example of both that were introduced in \cite{Nestke2001} through simplicial complexes and for computing a so-called \textit{Inner Metric Analysis}. IMAPs are an important feature of Inner Metric Analysis used to carry out various computationally- and mathematically-based music analysis tasks largely pertaining to metrical structure, such as meter induction \cite{BasDeHaas2016}, detecting and predicting rhythmic syncopation \cite{Volk2008,Koops2015,Bemman2024}, and predicting musical genre \cite{Chew2005}. The relevance of IMAPs to music is relatively straightforward insofar as being able to investigate all equally-spaced points in time within a sequence of $n$ discrete musical events can provide insight into the metrical regularity and structure directly from the music itself absent of any additional information that may otherwise be provided by the printed score (e.g., time signature). While a larger $N$ relative to $n$ generally indicates a more rhythmically-complex piece of music (i.e., one containing both small and more diverse rhythmic durations requiring a larger range of $N$ integers by which to properly encode them), the regularity and non-random nature of music presents an interesting case since the density of sequences remains high yet even when it decreases, most events are near-evenly spaced such that relatively high numbers of IMAPs still emerge.

In more directly related work, algorithms exist for finding the longest arithmetic progression and all MAPs \cite{Erickson1999} as well as proofs concerning various bounds on, and the expected number of, similar mathematical and combinatorial objects, such as $k$-term arithmetic progressions and MAPs \cite{Green2008,Kelley2023,Bloom2023a,Bloom2023b,Erickson1999}. In the case of IMAPs, however, no such proofs exist. Moreover, only a single procedural algorithm exists in open-sourced form capable of enumerating IMAPs \cite{Kranenburg2013} but is relatively inefficient, which has limited its use in large-scale music analytical tasks, and details concerning its design have so far remained unpublished outside of music. 

%%%%%%%%%%%
\subsection{Contribution}\label{sec:contribution}
In this paper, we provide the first-known efficient algorithm for enumerating IMAPs as well as prove for the first time the expected number of such progressions in random integer sequences and a bound on their order of growth. We show that our proposed algorithm is near-optimal when $N=\Theta(n)$. Additionally, we offer empirical results in the form of comparisons between the practical running time performances of the proposed algorithm and the aforementioned existing algorithm as well as between the actual enumerated number of IMAPs and their expected number as provided in our proof and exact construction. In applied terms, our contribution provides mathematical and computational music analysts with the ability to more efficiently analyze large and rhythmically complex musical pieces.

%%%%%%%%%%%
\section{Preliminaries}\label{sec:notation}

We begin by defining $S = \langle s_0, s_1, \ldots, s_{n-1} \rangle$ to be a strictly increasing sequence of $n$ positive integers drawn from the range, $\{1, 2, \ldots, N\}$ with $1 \leq s_0 < s_1 < \cdots < s_{n-1} \leq N$ where $n \leq N$. In various contexts throughout, $S$ differs only in how $N$ is fixed. In algorithmic contexts (sections~\ref{sec:previous-IMAP-algorithms} and \ref{sec:proposed-algorithm}), $S$ is the input such that $N := s_{n-1} = \max(S)$, so the upper bound of the range is tight. In probabilistic contexts (sections~\ref{sec:expected-number} and \ref{sec:experiments}), $S$ is drawn uniformly at random from all $\binom{N}{n}$ strictly increasing sequences in $\{1, 2, \ldots, N\}$, so $s_{n-1} \leq N$ in general. An IMAP can be uniquely defined by a pair of variables, $(s_i, d)$, referring to its starting element, $s_i\in S$, and its common difference, $d>0$. As done in previous work, we further denote by $l$ the number of elements minus 1 of an IMAP to fully specify it. Additionally, we will use $e$ to denote the ending element of an IMAP. The total number of IMAPs in $S$ we denote by $m$. For all algorithms, we use zero-based indexing and a left arrow, $\leftarrow$, to denote variable assignment. Ordered collections are indexable (e.g., arrays or lists) and denoted using angled brackets, $\langle x\rangle$, while unordered collections (e.g., dictionary or hashmap) are denoted using curly braces, $\lbrace x\rbrace$. Square brackets, $[x]$, are used to denote access to an arbitrary collection. For example, if $S$ is an ordered collection and $F$ is an unordered collection, then $S[d]$ denotes the $(d+1)$th element of $S$, while $F[d]$ denotes the value in $F$ for the dictionary key, $d$. Concatenation, $\oplus$, is used to denote the appending of an element (or collection) to an ordered collection. For convenience, a single period, `$.$', is used to denote access to the field of a composite-type variable (e.g., an object or struct). For example, `$p_0.d$' refers to the value of the field named `$d$' belonging to `$p_0$'. In all complexity analyses of the algorithms provided here, we will assume a standard word-RAM model, in which elementary arithmetic operations are unit-cost, rather than a logarithmic-cost model, unless where otherwise stated. Use of unordered collections and their associated access notation is merely to simplify the presentation of the pseudo-code of the proposed algorithm. In its actual implementation, we use indexable ordered collections---without altering the algorithm's complexity---to remain in line with the standard assumptions of a word-RAM model along with use of memory-efficient compressed sparse row ordered collections, flat indexing, and hoisted variables, wherever possible.

%%%%%%%%%%%%%%%%%%%%%%%%%%%%%%%%%%%%%%%%%%%%%%
\section{Related Work}\label{sec:related}

In additive number theory and combinatorics, the study of $k$-term arithmetic progressions and the various properties defining sequences of integers that either contain them or not has been of significant interest to researchers over many years \citep{Green2026}. Considerable work has been done to prove bounds on the number of $k$-term arithmetic progressions in arbitrary sequences \cite{Green2008,Kelley2023,Bloom2023a,Bloom2023b}. Green and Sisask have shown, for example, that the maximum number of 3-term arithmetic progressions in $n$-element integer sets is $\frac{n^2}{2}$ \cite{Green2008}, while more recently strong bounds have been proven for the same \cite{Kelley2023,Bloom2023a,Bloom2023b}. The more general case of proving bounds on the number of $k$-term arithmetic progressions remains challenging \cite{Bloom2023a,Bloom2023b,Green2026}. Similar work concerning bounds on MAPs is comparatively more limited, however, an upper-bound of $\bigO(\frac{n^2}{k^2})$ exists on the number of MAPs of at least length, $k$ \cite{Erickson1999}. 

In computing, the problem of finding the longest arithmetic progression in a sequence of length, $n$, has been solved using an output-sensitive dynamic programming algorithm having a time complexity of $\bigO(n^2)$, which is optimal in the 3-linear decision tree model of computation \cite{Erickson1999}. Incidentally, this algorithm can be adapted at little additional cost to enumerate all 3-term arithmetic progressions and, as has been suggested, can also be used to find all MAPs of at least length, $k$ \cite{Erickson1999}. In this same work, a similarly output-sensitive algorithm adopting a divide-and-conquer approach making use of the dynamic programming algorithm was provided for finding all such MAPs. This divide-and-conquer approach has a time complexity of $\bigO(\frac{n^2}{k} \, \log\frac{n}{k} \, \log \, k)$, which is an improvement over using the dynamic programming approach alone only for values of $k > \log \, n \, \log \, \log \, n$. In comparison to the related problems described above, considerably less work exists concerning IMAPs, and all such published work currently resides within their applications to mathematical and computational music analysis and theory \cite{Nestke2001,Chew2005,Volk2008,Koops2015,Bemman2024}.

%%%%%%%%%%%
\subsection{An Existing Algorithm for Enumerating Inclusion-Maximal Arithmetic Progressions}\label{sec:previous-IMAP-algorithms}

While there exist a few algorithms capable of enumerating IMAPs, these have been developed for applications to computational music analysis and most remain unpublished \cite{Volk2008,BasDeHaas2016}. The first known of these algorithms was originally implemented by Chris Dyer (2011) in Java and never published but has been used extensively in music-related research \cite{Volk2008,Koops2015}. A purely functional implementation, written in Haskell, has also been developed but similarly remains unpublished \cite{BasDeHaas2016}. Finally, a ported implementation of the original algorithm to C++ has since been made open source \cite{Kranenburg2013}, but details of its design have thus far never been published as part of any research.

%%%%%%%%%%%
\begin{algorithm}[!t]
\DontPrintSemicolon
\SetAlgoLined
\SetProcNameSty{textsc}
\SetProcArgSty{textsc}
\renewcommand{\SetProgSty}[1]{\renewcommand{\ProgSty}[1]{\textnormal{\csname#1\endcsname{##1}}\unskip}}%
\SetProgSty{textsc}
\SetKwFunction{Fcontained}{\contained}
\SetKwData{Left}{left}\SetKwData{This}{this}\SetKwData{Up}{up}
\SetKwInOut{Input}{input}\SetKwInOut{Output}{output}\SetKwInOut{Initialize}{Initialize:}
\Input{An ordered collection, $S$, of $n$ strictly increasing positive integers where $N:= s_{n-1} =\max(S)$}
\Output{An ordered collection, $P$, of the $m$ IMAPs in $S$ grouped by shared common difference, $d$. Each $p_0\in P$ is a triple, $\langle d, s, l \rangle$, containing the common difference, starting $S$ element, and length minus 1, respectively}
$P\leftarrow \langle\rangle\oplus \langle\rangle, i \leftarrow 0,1,\ldots,\lfloor\frac{N-S[0]}{2}\rfloor$ \Comment*[r]{$\langle\langle\langle d,s,l\rangle,\ldots\rangle,\ldots\rangle$}
$O \leftarrow \langle\rangle\oplus 1$ {\bf if} $i \in S$ {\bf else} $0$, $i \leftarrow 0,1,\ldots,N$\Comment*[r]{$\langle 0, 1, \ldots,1\rangle$}
$d\leftarrow 1$\Comment*[r]{Common difference}
\While(\Comment*[f]{No $d$ beyond midpoint of $S$}){$d \leq \lfloor\frac{N-S[0]}{2}\rfloor$}{
    $f_s \leftarrow \factors(d)$\Comment*[r]{All factors of $d$}
    \For(\Comment*[f]{Possible starting term of an IMAP}){$s \leftarrow S[0]$ {\bf to} $N$}{
        $l \leftarrow \extend(d, s, O)$\Comment*[r]{Determine maximal length (minus 1) for valid s and d}
        \If(\Comment*[f]{MAP of at least length 3 (minus 1)}){$l \geq 2$}{
            \If(\Comment*[f]{See lines 17--26}){$\textnormal{\isIMAP}$($d, s, l, f_s, P$)}{
                $P[d]\leftarrow P[d] \oplus \langle d,s,l\rangle$\Comment*[r]{New IMAP found}
                %\Comment*[r]{Computer Inner Metric Analysis; $O(A[|A|-1])$}
            }
        }
    }
    $d\leftarrow d+1$\Comment*[r]{Next $d$ for $s$ (not necessarily in $S$)}
}
\Return $P$
\;
 \vspace{1em}
\SetKwProg{Fn}{Function}{}{end}
  \Fn{\isIMAP($d, s, l, f_s, P$)}{
        \ForEach(\Comment*[f]{Each factor of $d$}){$d^{\prime} \in f_s$}{
                \ForEach(\Comment*[f]{Each IMAP with same common difference $d'$}){$p_0 \in P[d^{\prime}]$}{
                    %\If{$s<p_0.s$ {\bf or} $(s+l'\cdot d)>(p_0.s+p_0.l'\cdot p_0.d)$ {\bf or} $s\pmod{p_0.d} \not\equiv p_0.s\pmod{p_0.d}$}{
                    \If{{\bf not} $(s<p_0.s$ {\bf or} $(s+l\cdot d)>(p_0.s+p_0.l\cdot p_0.d)$ {\bf or} $s\pmod{p_0.d} \not\equiv p_0.s\pmod{p_0.d})$}{
                        \Return $false$ \Comment*[r]{MAP is not an IMAP}
                    }
                }
            }
        \Return $true$
  }
\caption[]{Enumerates inclusion-maximal arithmetic progressions (IMAPs) as implemented in \cite{Kranenburg2013} and originally in unpublished form by Chris Dyer (2011). Note that this is a reduced version of the original algorithm intended for computing an Inner Metric Analysis \cite{Nestke2001,Volk2008} of music, presented here in a form that better aligns with the central problem presented in this paper.}\label{alg:existing-algorithm}
\end{algorithm}

Alg.~\ref{alg:existing-algorithm} shows a simplified version of the procedural algorithm noted above \cite{Kranenburg2013}, shown here in a way which preserves its fundamental design while better illustrating its relevance to the central problem of this paper, namely, enumerating IMAPs. The algorithm takes as input an ordered collection, $S$, of strictly increasing positive integers and returns an ordered collection, $P$ (lines 1 and 16), of all IMAPs found in $S$. Each IMAP is uniquely defined as a triple, $\langle d,s,l\rangle$, where $d$ is its common difference, $s$ is its starting $S$ element, and $l$ is its length minus 1. 
The variable, $P$, is first initialized to be an ordered collection containing $\lfloor\frac{N-S[0]}{2}\rfloor$ empty ordered collections in line 1 followed by the helper variable, $O$, which is a boolean ordered collection containing, for each integer from $S[0]$ to $N$, one at the index corresponding to its value if it exists as an element in $S$ and zero if not (line 2). Alg.\ref{alg:existing-algorithm} works in a top-down fashion by first considering all positive common difference integers possible in $S$ (whether they are actually present or not), given $N$ and $k$ i.e., from 1 to $\lfloor\frac{N-S[0]}{2}\rfloor$ (lines 4--15). Then, for each such common difference, $d$, it computes all its factors ($\factors()$; line 5) and iterates once in lines 6--13 through all $s$ integers from $S[0]$ to $N$. For each $s$ in this loop, the function, $\extend()$, finds the maximal length (minus 1), $l$, of any arithmetic progression in $S$ beginning at $s$ if $s\in S$ and either $s < d$ or $s-d \notin S$: otherwise, returning a length of zero (line 7). If $l$ is at least 2, the MAP is checked in the $\isIMAP()$ function against all previously found IMAPs for each common difference factor, $d^{\prime}$ (lines 18--24), from the set of all factors computed earlier for $d$ (lines 19--23). Any MAP is not contained within any one IMAP, $p$, having the same common difference factor, $d^{\prime}$, if any one of the following conditions is true: (1) its leftmost element, $s$, is further left than the leftmost element of $p$, (2) its length extends beyond the length of $p$, or (3) in the case that $d=d^{\prime}$, it is disjoint with $p$ (i.e., belongs to a distinct residue class modulo $d$) (line 20; first, second, and third conditional expression, respectively). The algorithm terminates in line 16 by returning all discovered IMAPS stored in $P$.

%%%%%%%%%%%%%%%%
\begin{theorem}[Correctness]\label{theorem:correctness-existing}
For every strictly increasing sequence of $n$ positive integers, $S$, Alg.~\ref{alg:existing-algorithm} enumerates exactly the collection of all IMAPs of length, $k>2$, contained in $S$.
\end{theorem}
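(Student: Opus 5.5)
The plan is to argue by induction on the outer loop variable $d$. The invariant is that, when the loop for difference $d$ begins, $P[d']$ holds exactly the IMAPs of $S$ with common difference $d'$, for every $d'<d$, and $P[d]$ is empty. Three preliminary facts are needed, and I would establish them first.

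\emph{Fact 1 (range of $d$).} Every arithmetic progression of length $k\ge 3$ in $S$ satisfies $2d \le t_2-t_0 \le N-S[0]$. So restricting to $d\le\lfloor\frac{N-S[0]}{2}\rfloor$ loses nothing.

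\emph{Fact 2 (each MAP is generated once).} For fixed $d$, \extend{} returns a nonzero length only when $s\in S$ and $s-d\notin S$, and it then returns the maximal $l$ with $s,s+d,\dots,s+ld\in S$. Hence the triples with $l\ge 2$ produced in line~7 are exactly the MAPs of difference $d$ and length $k\ge 3$, each produced exactly once. Two distinct MAPs with the same $d$ cannot contain one another, since each is maximal in its residue class.

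\emph{Fact 3 (containment structure).} Suppose a MAP $A$ with difference $d$ is properly contained in another MAP $B$ with difference $d'$. Then $d'$ divides $d$, because two consecutive elements of $A$ differ by $d$ and lie in $B$, and $d'<d$ by Fact~2. Moreover, if $A$ is not an IMAP, it is properly contained in some IMAP. To see this, take a MAP containing $A$ that is maximal under inclusion; one exists since $S$ is finite. Such a MAP is an IMAP by definition, and its difference divides $d$ and is strictly smaller than $d$.

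Next I would verify that the test in line~20 is exactly a containment test. Let $p_0$ have difference $d'$ with $d'\mid d$. The MAP $\langle d,s,l\rangle$ is contained in $p_0$ if and only if $s\ge p_0.s$, $s+ld\le p_0.s+p_0.l\cdot d'$, and $s\equiv p_0.s \pmod{d'}$. For the "if" direction, every element $s+jd$ lies in the interval of $p_0$ and is congruent to $p_0.s$ modulo $d'$, because $d'\mid d$; so it is a term of $p_0$. The converse is immediate. Therefore \isIMAP{} returns false precisely when the current MAP is contained in some stored progression whose difference divides $d$.

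With these in hand, the inductive step goes as follows. By the invariant, the stored progressions with difference a proper factor $d'<d$ are exactly the IMAPs of those differences. The entries of $P[d]$ stored so far are MAPs of the same difference $d$, so by Fact~2 they never contain the current MAP. I need to handle this case explicitly, because $\factors(d)$ may include $d$ itself.

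The two directions then follow.
\begin{itemize}
\item If the current MAP is an IMAP, no other MAP properly contains it, so \isIMAP{} returns true and the MAP is stored.
\item If it is not an IMAP, Fact~3 gives an IMAP with difference a proper divisor $d'<d$ containing it. That IMAP is already in $P[d']$, so the MAP is rejected.
\end{itemize}
Hence after the loop for $d$ finishes, $P[d]$ is exactly the set of IMAPs of difference $d$. Combining this with Fact~1 gives the theorem.

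The main obstacle is Fact~3. I must justify that testing only against stored IMAPs, rather than against all MAPs, suffices: this relies on transitivity of inclusion plus the existence of an inclusion-maximal container. I must also pin down the divisibility of differences, since only factors of $d$ are checked.
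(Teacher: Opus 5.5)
Your proof is correct and follows essentially the same route as the paper's: induction on the outer-loop difference $d$ with the invariant that all IMAPs of smaller difference are already stored, the fact that any properly containing MAP has difference $d'$ with $d'\mid d$ and $d'<d$, and the existence of a containing IMAP (you take an inclusion-maximal MAP above $A$, while the paper descends a chain of divisors). Two of your steps are more careful than the paper's: you check explicitly that the line-20 test is exactly a containment test, and you treat separately the same-difference entries of $P[d]$ that arise when $d$ itself is among its factors.
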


\begin{pf}
We rely on two facts that follow directly from the definition of an IMAP: (i) each IMAP is uniquely identified by the pair, $(s,d)$, consisting of its starting element and common difference, respectively, and (ii) if a MAP with a common difference, $d$, is properly contained in another MAP, then the containing MAP has a common difference, $d'$, with $d' \mid d$ and $d' < d$.
 
We establish the following invariant for the outer while loop (lines 4--15). At the start of the iteration with value, $d$, the collection, $\bigcup_{d' < d} P[d']$, contains exactly those IMAPs of $S$ whose common difference is less than $d$. Before the first iteration ($d=1$), $P$ is empty, so the invariant holds trivially. Next, we fix $d$ and assume the invariant holds. The inner for loop (lines 6--13) iterates $s$ over every integer in $[S[0], N]$, visiting every possible starting value. As implemented, $\extend()$ returns the maximum, $l$, such that $s, s+d, \ldots, s + l \cdot d \in S$ whenever $s \in S$ and either $s < d$ or $s-d \notin S$, and returns $0$ otherwise. Therefore, the conditional expression, $l \geq 2$, in line 8 is satisfied if and only if $(s, d)$ defines a MAP of length, $k > 2$, in $S$. Each candidate MAP is then assessed by $\isIMAP()$ (lines 18--24), which inspects every stored IMAP, $p_0 \in P[d']$, for each $d' \in \factors(d)$. The compound conditional expression in line 20 holds precisely when $p_0$ properly contains $(s, d)$, that is, $p_0$ has a starting element no later, an ending element no earlier, and when $d' = d$, their starting elements lie in the same residue class modulo $d'$. If $\isIMAP()$ rejects $(s, d)$, then some stored IMAP properly contains it, so $(s,d)$ is not an IMAP. However, suppose now that $(s, d)$ is not an IMAP. By (ii), some MAP, $(s_1, d_1)$, properly contains it with $d_1 \mid d$ and $d_1 < d$, so following this reasoning for the strictly decreasing divisors of $d$ will terminate at an IMAP, $(s^*, d^*)$, with $d^* \mid d$ and $d^*<d$, which by the invariant lies in $P[d^*]$ and, by transitivity, properly contains $(s,d)$. Since $d^* \in \factors(d)$, $\isIMAP()$ assesses $(s^*, d^*)$ and correctly rejects $(s,d)$. Therefore, $P[d]$ will contain exactly the IMAPs in $S$ with common difference, $d$, and the invariant is preserved.
 
No IMAP has a common difference exceeding $\lfloor\frac{N-S[0]}{2}\rfloor$, since such a progression would have fewer than three elements. Therefore, the outer loop halts once $d$ exceeds this bound and all inner loops range over finite collections, so the algorithm terminates.
 
By the invariant at termination, $\bigcup_d P[d]$ is exactly the collection of IMAPs of length, $k > 2$, in $S$.\qed
\end{pf}

%%%%%%%%%%%%%
\begin{theorem}[Time complexity]\label{alg-1-thm}
Let $S$ be a strictly increasing sequence of $n$ positive integers and $N$ be the value of its largest integer where $n\leq N$. Then, the time complexity of Alg.~\ref{alg:existing-algorithm} is $\bigO\left(N^{2+\frac{C}{\log\log N}}n\right)=\bigO(N^{2+o(1)}n)$ for some positive constant, $C$.
\end{theorem}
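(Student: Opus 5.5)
The bound $\bigO\left(N^{2+\frac{C}{\log\log N}}n\right)$ will follow by multiplying the three nested loop counts of Alg.~\ref{alg:existing-algorithm} and bounding each factor separately. The outer while loop runs over $d=1,\dots,\lfloor\frac{N-S[0]}{2}\rfloor$, which is $\bigO(N)$ iterations. In each iteration, $\factors(d)$ costs $\bigO(\sqrt{d})=\bigO(\sqrt N)$ by trial division, for $\bigO(N^{3/2})$ in total, which is dominated by the rest. The inner for loop runs over $s$ from $S[0]$ to $N$, so $\bigO(N)$ iterations per $d$ and $\bigO(N^2)$ pairs $(d,s)$ in total. It therefore remains to bound the cost of one pair: one call to $\extend()$ and at most one call to $\isIMAP()$.

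For $\extend(d,s,O)$, the lookup table $O$ answers each membership query in $\bigO(1)$. The call either returns $0$ immediately, after checking $s\in S$ and $s-d\notin S$, or walks the progression $s,s+d,\dots$ while its terms lie in $S$. It therefore costs $\bigO(l+1)=\bigO(\min(n,N/d))=\bigO(n)$. This part contributes at most $\bigO(N^2 n)$ in total, within the claimed bound.

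The main term is $\isIMAP()$. It iterates over every divisor $d'\in\factors(d)$ and, for each, over every stored IMAP in $P[d']$, doing $\bigO(1)$ work per comparison (the residue test in line 20 is unit cost on the word-RAM). The key step is to bound $|P[d']|$ for a fixed $d'$. Distinct MAPs with difference $d'$ are determined by their starting elements $s\in S$, since each such start must satisfy $s-d'\notin S$. Hence $|P[d']|\le n$, and one call to $\isIMAP()$ costs $\bigO(\tau(d)\,n)$, where $\tau$ is the divisor-counting function. For the divisor count I will invoke the classical Wigert bound $\tau(d)\le 2^{(1+o(1))\frac{\ln d}{\ln\ln d}}$, which gives $\tau(d)\le N^{\frac{C}{\log\log N}}$ for some constant $C$ and all $d\le N$. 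For small $N$ this holds after enlarging $C$, using that $x/\log\log x$ is eventually increasing.

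Summing over the $\bigO(N^2)$ pairs $(d,s)$ then gives total cost
\[
\bigO\!\left(N^2\cdot\left(n+N^{\frac{C}{\log\log N}}\,n\right)\right)=\bigO\!\left(N^{2+\frac{C}{\log\log N}}\,n\right),
\]
and $\frac{C}{\log\log N}=o(1)$ yields the second form. The step needing the most care is the divisor-bound step. I must state the $\tau$ bound precisely with an absolute constant valid for all $d\le N$, not merely asymptotically, and make sure the monotonicity in $d$ is handled so that the uniform bound $N^{C/\log\log N}$ applies to every $d$ in the loop. The bound $|P[d']|\le n$ is easy but must be stated explicitly. Finally, since Theorem~\ref{alg-1-thm} only claims an upper bound, I do not need to show tightness.
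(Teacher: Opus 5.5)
Your proof is correct and follows essentially the same route as the paper's. Both multiply the $\bigO(N)$ values of $d$ and $s$, charge $\extend()$ $\bigO(n)$ per pair, bound the divisor count of $d$ by $N^{C/\log\log N}$, and bound $|P[d']|$ by $\bigO(n)$. The only differences are minor: the paper gets $|P[d']|\le\lfloor n/3\rfloor$ from the disjointness argument of Lemma~\ref{auxil-lem} where you count distinct starting elements, and it charges $\factors()$ its as-implemented $\bigO(d)$ cost rather than your $\bigO(\sqrt d)$, which is equally dominated.
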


Before proving Theorem~\ref{alg-1-thm}, we first prove an auxiliary lemma on the number of IMAPs having the same common difference.

\begin{lemma}[Fixed-difference IMAP bound]\label{auxil-lem}
    For any common difference, $d$, the maximum number of IMAPs with $d$ in a sequence, $S$, of $n$ elements is $\lfloor \frac{n}{3} \rfloor$.
\end{lemma}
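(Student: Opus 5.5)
The plan is to show that distinct IMAPs with the same common difference $d$ are pairwise disjoint as sets of elements of $S$. Since each has at least three elements, a counting argument then gives at most $\lfloor n/3\rfloor$ of them. A short construction will show the bound is attained, which justifies the word ``maximum'' in the statement of Lemma~\ref{auxil-lem}.

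\textbf{Disjointness.} Let $(s,d)$ and $(s',d)$ be two distinct IMAPs with common difference $d$. Suppose they share an element $x\in S$.
\begin{itemize}
\item Both progressions lie in the residue class of $x$ modulo $d$.
\item By condition (2) in the definition of an IMAP, each of them is the full maximal run of consecutive terms $x+jd$ (with $j\in\mathbb{Z}$) that lie in $S$ and contain $x$. One extends from $x$ leftward until $x-jd\notin S$, and rightward until $x+jd\notin S$.
\item This maximal run is uniquely determined by $x$ and $d$.
\end{itemize}
Hence the two progressions coincide, so $s=s'$, contradicting their distinctness. This step uses only maximality (condition (2)), not inclusion-maximality. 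It is essentially fact (i) from the proof of Theorem~\ref{theorem:correctness-existing}, strengthened to ``any shared element forces equality.'' The point needing care is that an arithmetic progression in the sense of condition (1) consists of consecutive terms with no gaps, so the run through $x$ cannot skip over an element of $S$.

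\textbf{Counting.} Let there be $m_d$ IMAPs with difference $d$. They are pairwise disjoint subsets of $S$, each of size $k\ge 3$, so their union has at least $3m_d$ elements. Since the union lies in $S$, we get $3m_d\le n$, hence $m_d\le\lfloor n/3\rfloor$.

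\textbf{Tightness.} I expect this to be the main obstacle, since the construction must ensure that none of the progressions is swallowed by one with a smaller difference. Write $n=3q+r$ with $0\le r<3$.
\begin{itemize}
\item For $d\ge 2$: for $c=0,\dots,q-1$, take the blocks $\{M c + 1,\ M c+1+d,\ M c+1+2d\}$ with $M$ huge. Add $r$ further elements far away and spaced by distinct large gaps, so they create no new progressions.
\item Each block is then a MAP with difference $d$.
\item Any progression properly containing a block, say with difference $d'<d$ and $d'\mid d$, would need elements strictly inside the block. Such elements do not exist, since each block has only three points and their gaps are exactly $d$.
\item So each block is an IMAP, and we obtain $q=\lfloor n/3\rfloor$ of them.
\end{itemize}

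For $d=1$ the same construction works, because nothing can contain a run with difference 1 properly.

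Spacing the blocks at distance $M$ could in principle create progressions across blocks, for example the block starting points $1, M+1, 2M+1$ with difference $M$. These involve differences other than $d$ and do not affect the count for difference $d$. Being IMAPs with difference $d$ depends only on containment by progressions with divisor differences, and those are local to each block once $M>2d$. If the argument for a fixed $d$ needs only an upper bound, tightness can be mentioned briefly.
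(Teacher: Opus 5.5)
Your proof is correct and follows the paper's argument. MAPs with the same difference $d$ are disjoint by maximality and each has at least three elements, so there are at most $\lfloor n/3\rfloor$ of them, and an explicit example shows the bound is attained. The only difference is the extremal example. The paper uses the compact sequence $(d,2d,3d,5d,6d,7d,9d,\ldots)$, whose elements are all multiples of $d$, so no progression whose difference is a proper divisor of $d$ can refine a block. Your widely spaced blocks achieve the same thing, and unlike the paper you check the inclusion-maximality of each block explicitly; the cross-block progressions they create are harmless, as you note.
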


\begin{proofof}{Lemma~\ref{auxil-lem}}     
    If two MAPs in $S$ have the same common difference, $d$, they are by definition disjoint. Since each MAP has at least $k=3$ elements, there are at most $\frac{n}{3}$ MAPs that have the same common difference. Conversely, consider the sequence, $S = S(n,d)$, such that $S(n,d) = (d, 2d, 3d, 5d, 6d, 7d, 9d, \ldots)$, or more formally, $s_i = d \left( i + \left\lfloor \frac{ i - 1 }{ 3 } \right\rfloor \right)$. Then, $S$ has $\lfloor \frac{n}{3} \rfloor$ IMAPs of common difference, $d$, since all IMAPs are MAPs.\qed
\end{proofof}

\begin{proofof}{Theorem~\ref{alg-1-thm}}
The total work of the algorithm is carried out in the six loop iterations detailed below and presented in scope order from top to bottom.
\begin{enumerate}
    \item The outer while loop (lines 4--15) traverses all common difference integers, $d$, from 1 to the largest possible, given $N$ and the minimum value of $k=3$, which is $\bigO(N)$.
    \item As implemented, $\factors()$ (line 5) computes the factors of $d$ by traversing all integers from $1$ to $d$, which is $\bigO(N)$. It should be noted, however, that computing all factors of $d$ can be done more efficiently using the $\bigO(\sqrt d \log d) = \tilde\bigO(\sqrt d) = \tilde\bigO(\sqrt N)$ trial division algorithm.
    \item The loop in lines 6--13 traverses all integers from the smallest $S$ element to $N$, which is at most $\bigO(N)$.
    \item As implemented, $\extend()$ (line 7) traverses at most $n-1$ integers in $S$ from $s$ to $N$ by an increment of $d$, which is $\bigO(n)$, since $n\leq N$ and it will only ever be the case that $N$ integers are traversed when $n=N$.
    \item The for-each loop in lines 18--24 traverses all previously computed factors of $d$ from (2), which has been shown to be $\bigO(f(N)) = \bigO\left(N^{ \frac{C}{\log\log N} }\right)$, for some small positive constant, $C$, 
    where $f(N)$ denotes the number of divisors of $N$ \citep{Nicolas1983,Hardy2008}.
    \item The final nested for-each loop in lines 19--23 traverses all IMAPs having the same $d$, which is $\bigO(n)$ by Lemma \ref{auxil-lem}.
\end{enumerate}
From the above, the total time of Alg.~\ref{alg:existing-algorithm} can be expressed as
\[
    T_1\left(n,N\right) = \bigO \left( N\cdot \left(N+ \left(N\cdot\left(n+\left(N^{\frac{C}{\log\log N}}\cdot n\right)\right)\right)\right)\right) = \bigO \left( N^{2+\frac{C}{\log\log N}}n \right)
\]
which we can simplify to $\bigO(N^{2+o(1)}n)$.\qed
\end{proofof}

\begin{corollary}[Polynomial-growth regime]
If $N=\Theta\left(n^c\right)$ for some constant, $c\geq1$, then the time complexity of Alg.~\ref{alg:existing-algorithm} is $\bigO(n^{2c+1+o(1)})$.
\end{corollary}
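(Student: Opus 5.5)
The corollary follows by substituting $N=\Theta(n^c)$ into the bound of Theorem~\ref{alg-1-thm}, $T_1(n,N)=\bigO\left(N^{2+\frac{C}{\log\log N}}n\right)$, and simplifying the exponent. I would treat the polynomial part and the subpolynomial divisor-count part separately and then recombine them.

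First, the polynomial part. Since $N=\Theta(n^c)$, there is a constant $c_2>0$ with $N\le c_2 n^c$ for all sufficiently large $n$. Hence $N^2 n \le c_2^2 n^{2c+1}$, which is $\bigO(n^{2c+1})$.

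Second, the factor $N^{\frac{C}{\log\log N}}$. Writing it as $\exp\!\left(\frac{C\log N}{\log\log N}\right)$, the function $x\mapsto \frac{\log x}{\log\log x}$ is eventually increasing. So $N\le c_2n^c$ gives
\[
N^{\frac{C}{\log\log N}} \le \exp\!\left(\frac{C(c\log n+\log c_2)}{\log\log(c_2 n^c)}\right).
\]
The denominator satisfies $\log\log(c_2n^c)=\log\log n+\bigO(1)$. The exponent is therefore $\frac{C c\log n}{\log\log n}(1+o(1))$, and the whole factor equals $n^{\frac{C'}{\log\log n}}$ for some constant $C'$. This is $n^{o(1)}$ because $\frac{C'}{\log\log n}\to 0$.

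Multiplying the two parts gives $\bigO(n^{2c+1}\cdot n^{o(1)})=\bigO(n^{2c+1+o(1)})$, as claimed. The only step that needs care is the second one: we must check that replacing $N$ by a $\Theta(n^c)$ quantity changes the exponent $\frac{C\log N}{\log\log N}$ only by a constant factor, so that it stays $o(\log n)$. The main thing to confirm there is the monotonicity of $\frac{\log x}{\log\log x}$ for large $x$, together with $\log\log(c_2n^c)\sim\log\log n$. Both are routine.
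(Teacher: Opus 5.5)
Your proposal is correct and takes the same route as the paper, which states this corollary without a separate proof as a direct substitution of $N=\Theta(n^c)$ into Theorem~\ref{alg-1-thm}, giving $N^2 n=\bigO(n^{2c+1})$ and $N^{C/\log\log N}=n^{o(1)}$. One small wording fix: the divisor-count factor is \emph{at most} $n^{C'/\log\log n}$ (for any fixed $C'>Cc$ and all large $n$), not equal to it, and that upper bound is all the $\bigO$ statement requires.
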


\begin{rmk}[Space complexity]\label{rem:existing-algorithm}
Since Alg.~\ref{alg:existing-algorithm} must maintain a record of all $m$ IMAPs, the order of growth for which we show in section~\ref{sec:expected-value-bounds} to be asymptotically quadratic in $n$, the overall space complexity is thus dominated by either $m$ or the ordered collection, $O$, of indicator values for $S$, in the case that $n^2 \ll N$, and is therefore $\bigO(n^2+N)$.
%it is relatively straightforward to see that 
\end{rmk}

%%%%%%%%%%%%%%%%%%%%%%%%%%%%%%%%%%%%%%%%%%%%%%
%%%%%%%%%%% Start of contribution %%%%%%%%%%%%
%%%%%%%%%%%%%%%%%%%%%%%%%%%%%%%%%%%%%%%%%%%%%%
\section{Proposed Algorithm}\label{sec:proposed-algorithm}

\begin{algorithm}[!t]
%\small
\DontPrintSemicolon
\SetAlgoLined
\SetProcNameSty{textsc}
\SetProcArgSty{textsc}
\renewcommand{\SetProgSty}[1]{\renewcommand{\ProgSty}[1]{\textnormal{\csname#1\endcsname{##1}}\unskip}}%
\SetProgSty{textsc}
\SetKwData{Left}{left}\SetKwData{This}{this}\SetKwData{Up}{up}
\SetKwInOut{Input}{input}\SetKwInOut{Output}{output}\SetKwInOut{Initialize}{Initialize:}
\Input{An ordered collection, $S$, of $n$ strictly increasing positive integers where $N:= s_{n-1} =\max(S)$}
\Output{An ordered collection, $P^\prime$, of the $m$ IMAPs in $S$. Each $p_0\in P^\prime$ is a triple, $\langle s, d, e\rangle$, containing its starting $S$ element, common difference, and ending $S$ element, respectively.}
$P^\prime \leftarrow \langle\rangle$ \Comment*[r]{$\langle\langle s, d, e \rangle,\dots\rangle$}
$O \leftarrow \langle\rangle\oplus 1$ {\bf if} $i \in S$ {\bf else} $0$, $i \leftarrow 0,1,\ldots,N$\Comment*[r]{$\langle 0, 1, \ldots,1\rangle$}
\nlnonumber
\nlast $L, F, d_{max} \leftarrow \longprogressions(S)$\Comment*[r]{Modified from \cite{Erickson1999} and shown in Alg.~\ref{alg:dynamic-programming-algorithm}}
\nlnonumber
\nlast $F \leftarrow \factorsprime(F, d_{max})$\Comment*[r]{Modified Euler Linear Sieve shown in Alg.~\ref{alg:factors}}
\For(\Comment*[f]{Last two elements less than length 3}){$i \leftarrow 0$ \KwTo $n-3$}{
    $j \leftarrow i + 1$\;
    $s \leftarrow S[i]$\Comment*[r]{Starting term of a possible IMAP}
    $d \leftarrow S[j] - S[i]$\;
    $d_{max'} \leftarrow \min\left(\lfloor\frac{N-s}{2}\rfloor, d_{max}\right)$\Comment*[r]{Greatest possible remaining common difference}
    $\ell \leftarrow \langle\rangle\oplus 0, i \leftarrow 0, 1, \ldots, d_{max}$\Comment*[r]{AP lengths associated to respective $d$}
     \While(\Comment*[f]{No $d$ beyond remaining midpoint}){$d \leq d_{max'}$ {\bf and} $j < n - 1$} {
        \If(){$L[i][j] > 2$} {
            $\ell[d] \leftarrow L[i][j]$\Comment*[r]{AP length for this $d$}
            \If(\Comment*[f]{Start of a new MAP}){$s - d < 0$ {\bf or} {\bf not} $O[s - d]$} {
                $e \leftarrow s + d \cdot (\ell[d] - 1)$\Comment*[r]{Ending term of a possible IMAP}
                $f_s' \leftarrow F[d]$\Comment*[r]{All $p\mid d$ co-factors of $d$} 
                \nlnonumber
                \nlast\If(\Comment*[f]{See lines 27--37}){$\textnormal{\isIMAPprime}(s, \ell, e, f_s')$} {
                    $P^\prime\leftarrow P^\prime \oplus \langle s,d,e\rangle$\Comment*[r]{New IMAP found}
                    }
                }
            }
            $j\leftarrow j + 1$\;
            $d \leftarrow S[j] - S[i]$\Comment*[r]{Next $d$ in $S$ for $s$}
        }
    }
\Return $P^\prime$\;
\nlnonumber
\vspace{1em}
\SetKwProg{Fn}{Function}{}{end}
\nlast  \Fn{\isIMAPprime($s, \ell, e, f_s'$)}{
    \ForEach(\Comment*[f]{Each $p\mid d$ co-factor of $d$}){$d' \in f_s'$}{
        \If(\Comment*[f]{Common difference factor $d'$ exists for $s$}){$\ell[d'] > 0$} {
           $e' \leftarrow s + d' \cdot (\ell[d'] - 1)$\Comment*[r]{End term of AP with factor $d'$}
           \If(\Comment*[f]{MAP with $e$ contained in AP with $e'$}){$e \leq e'$} {
                \Return $false$\Comment*[r]{Not an IMAP}
            }
        }
    }
    \Return $true$
  }
\caption{Enumerates inclusion-maximal arithmetic progressions (IMAPs) based in part on the algorithm offered in \cite{Erickson1999} for finding the longest arithmetic progression. Note that asterisks denote modified versions of existing procedures.}\label{alg:proposed-algorithm}
\end{algorithm}

As an introduction to our algorithm shown in Alg.~\ref{alg:proposed-algorithm}, the general procedure amounts to calling Alg.~\ref{alg:dynamic-programming-algorithm}, which is a modified version of the bottom-up dynamic programming algorithm for finding the single longest arithmetic progression \cite{Erickson1999} shown in section~\ref{sec:longest-arithmetic-progression-alg}, and efficiently extracting all IMAPs from the lookup table it constructs. We should note, however, that because we are concerned only with a constant minimal value of $k>2$, the improved asymptotic complexity of the divide-and-conquer approach for finding all MAPs \cite{Erickson1999} over the $\bigO(n^2)$ approach for minimal values of $k > \log \, n \, \log \, \log \, n$ (noted in section~\ref{sec:related}), means that the latter approach---which forms the basis of the proposed algorithm---will be more efficient for all but the smallest $n$. Were we to consider instead a generalization of the problem in which the minimal of $k$ could vary and increase in size relative to $n$, the divide-and-conquer approach would indeed be more efficient for discovering all MAPs for these certain values of $k$. It is not immediately clear, however, whether this approach could be efficiently modified to enumerate all IMAPs. 

Finally, the proposed algorithm relies on Lemma~\ref{lemma:imap-characterization-via-primes} below to re-characterize inclusion-maximality as determined by Alg.~\ref{alg:existing-algorithm} which, combined with its bottom-up approach, allows for a significant improvement in overall efficiency.

\begin{lemma}[IMAP characterization via primes]\label{lemma:imap-characterization-via-primes}
    A MAP with common difference, $d$, is inclusion-maximal if and only if it is not properly contained in any other MAP with common difference, $c = \frac{d}{p}$, for any prime, $p\mid d$.
\end{lemma}
\begin{pf}
    Let $A = \langle a_0, a_1 = a_0 + d, \dots, a_{k-1} = a_0 + (k-1)d \rangle$ be a MAP that is not an IMAP. Therefore, there is a MAP $A' = \langle a'_0 = a_0, \ldots, a'_{k'-1} = a_0 + (k'-1) d' \rangle$ with common difference, $d'<d$, that properly contains $A$. Noting that $d'\mid d$, let $p$ be any prime factor of $\frac{d}{d'}$. Then, the arithmetic progression, $B = \langle b_0 = a_0, b_1 = a_0 + \frac{d}{p}, \dots, b_{p(k-1)} = a_0 + (k-1)d = a_{k-1} \rangle$, has common difference, $\frac{d}{p}$, and properly contains $A$.\qed
\end{pf}

\noindent As such, for any integer, $d$, and any prime factor, $p$ of $d$, we refer to $c = \frac{ d }{ p }$ as the $p \mid d$ co-factor of $d$.

%%%%%%%%%%%
\subsection{Algorithm 2}

Alg.~\ref{alg:proposed-algorithm} takes as input an ordered collection, $S$, of strictly increasing positive integers and returns an unordered collection, $P^\prime$ (lines 1 and 23), of all IMAPs found in $S$. Each IMAP in $P^\prime$ is a triple, $\langle s, d, e \rangle$, corresponding to its starting $S$ element, $s$, common difference, $d$, and ending $S$ element, $e$. The variable, $P^\prime$, is first initialized to be empty in line 1 followed by a helper variable, $O$ (line 2), which serves the same purpose as it did in Alg.~\ref{alg:existing-algorithm}. 
 
The primary work by the algorithm is carried out by $\longprogressions()$ \cite{Erickson1999} (Alg.~\ref{alg:dynamic-programming-algorithm}) called in line 3 and within the for loop in lines 5--25. In its original presentation, $\longprogressions()$ returns the length of the single longest arithmetic progression. However, in the course of finding the length of this progression, it maintains a lookup table, $L$, that stores for each $(i<j)$-pair in $S$, the maximal length of the arithmetic progression having as its first two terms the elements in $S$ indexed by this pair. Importantly, we have made three modifications to this existing algorithm: (1) we return the lookup table, $L$, rather than the length of the longest arithmetic progression while removing the few components necessary for maintaining this length, (2) use its traversal through all $(i<j)$ pairs in $S$ that form an arithmetic triple to initialize and then return an unordered collection of key-value pairs, $F$, containing as each key a distinct common difference, and (3) compute and return the largest common difference key, $d_{max}\in F$. In line 4, we employ a procedure in $\factorsprime()$, based on a modified version of the Euler's Linear Sieve algorithm (Alg.~\ref{alg:factors}), for computing all necessary $p\mid d$ co-factors as defined in Lemma~\ref{lemma:imap-characterization-via-primes}. For this procedure, we use the Euler Linear Sieve algorithm to compute the standard sieve containing the smallest prime factors for the integers from 2 to $d_{max}$ and then compute and store the collection of all such $p\mid d$ co-factors for each distinct common difference, $d\in F$, found by $\longprogressions()$ \cite{Erickson1999} and return the completed $F$. Further detail regarding Alg.~\ref{alg:dynamic-programming-algorithm} and Alg.~\ref{alg:factors} is provided in sections~\ref{sec:longest-arithmetic-progression-alg} and \ref{sec:factors-alg}, respectively.

The other main procedure of the algorithm in lines 5--25 traverses the upper-right triangle of $L$ by row and column and efficiently checks, for each corresponding $(i<j)$ pair of $S$ elements for a given $i$, whether the maximal length of the arithmetic progression when taking this pair of elements to be its first two terms (regardless of whether these are its true starting terms) is contained within any other arithmetic progression having a common difference that is a co-factor of the form $p\mid d$ with smaller $j$. An outer for loop begins in line 5 by iterating from the $i$th row of $L$ and $i$th element of $S$ from 0 to $n-3$. Inside this loop, $j$ is set equal to $i+1$ (line 6), the starting term of a possible IMAP, $s$, is set equal to $S[i]$ (line 7), and $d$ is set equal to the common difference for the current $(i<j)$ pair (line 8). For each $i$th row in $L$, only $j$ corresponding to the $(i<j)$ pairs having a common difference less than or equal to $\min\left(\lfloor\frac{N-s}{2}\rfloor, d_{max}\right)$ for the current $s$ need to be considered, so this maximum $d$ value, $d_{max'}$, is initialized in line 9. A helper variable, $\ell$, is then initialized in line 10 and associates the maximal length of each $(i<j)$ arithmetic progression for this $i$ that could contain any potential IMAP also beginning from this $i$, to each respective common difference so far discovered up to $j$. A while loop then begins in lines 11--24 for each $d$ less than or equal to $d_{max'}$. Inside this while loop, the maximal length of each $(i<j)$ arithmetic progression having a length $k>2$ is stored at an index in $\ell$ equal to its common difference (line 13). A conditional expression in line 14 then determines whether this $(i<j)$ arithmetic progression is the start of a new arithmetic progression, and if so, is a candidate for an IMAP. The function, $\isIMAPprime()$ (line 17 and 27--37), then loops through all previously computed $p\mid d$ co-factors, $d'$, of the common difference, $d$, of the given $(i<j)$ arithmetic progression and checks whether any are so far present in $\ell$ for this $i$ by $\ell[d']>0$ (line 29). If such a $p\mid d$ co-factor has been found for this $i$, it then checks whether the maximal length of its arithmetic progression corresponds to an ending element that is at least as great at the ending element of the candidate $(i<j)$ arithmetic progression (line 31), and if so, the candidate $(i,j)$ arithmetic progression is not an IMAP and the $\isIMAPprime()$ returns \textit{false}. If none of the $p\mid d$ factors of the candidate $(i,j)$ arithmetic progression's common difference exist in $\ell$ and have an ending element at least as great, then an IMAP has been found and is stored in $P^\prime$ (line 18). In lines 22--23, $j$ is first incremented by 1 and then the next common difference, $d$, is set equal to the difference between the new $S[j]$ element and $S[i]$. After this while loop concludes for all $i$, the algorithm returns in line 26 with all discovered IMAPs stored in $P^\prime$.

Finally, it is worth highlighting some further differences between Alg.~\ref{alg:existing-algorithm} and Alg.~\ref{alg:proposed-algorithm} as well as possible improvements that can be made to the latter. For example, if one were concerned only with counting the number of IMAPs, the relatively high cost of appending to an ordered collection in line 18 of Alg.~\ref{alg:proposed-algorithm} can be avoided by simply not storing each IMAP. In contrast, this is not possible with Alg.~\ref{alg:existing-algorithm} since every one potential IMAP must be checked against a record containing already discovered IMAPs. If one does wish to retrieve the IMAPs in Alg.~\ref{alg:proposed-algorithm}, however, this can be done slightly more efficiently in terms of absolute space, than presented above by simply modifying the lookup table, $L$. As $L$ is traversed in lines 5--25, it can be efficiently updated so as to zero out all $(i<j)$ corresponding to non-IMAP locations with the final result being that any remaining values greater than 2 are IMAPs. The collection of all IMAPs can then be retrieved in $\bigO(n^2)$ by again traversing $L$ and halting in each $i$th row when a 2 is reached. Perhaps most importantly, however, the proposed algorithm can be improved further, even when $k>2$ and thus without using the divide-and-conquer approach noted above, if the number of MAPs is small. By collecting all $m'$ MAPs found in $\longprogressions()$, all IMAPs can be filtered directly from these without having to traverse $L$ again in lines 5--25. If their number is small, e.g., $m'=\bigO(n)$, filtering in this way can be more efficient (see Remark~\ref{rmk:few-maps}). 

%%%%%%%%%%%%%%%%
\begin{theorem}[Correctness]\label{theorem:correctness-proposed}
For every strictly increasing sequence of $n$ positive integers, $S$, Alg.~\ref{alg:proposed-algorithm} enumerates exactly the collection of all IMAPs of length, $k>2$, contained in $S$.
\end{theorem}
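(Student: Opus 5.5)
We will prove Theorem~\ref{theorem:correctness-proposed} by a per-row argument: fix the outer index $i$ (so $s=S[i]$). We show that the pairs $(s,d)$ appended to $P^\prime$ during iteration $i$ are exactly the IMAPs of $S$ with starting element $s$. Since every IMAP has a unique pair $(s,d)$ (fact (i) in the proof of Theorem~\ref{theorem:correctness-existing}), taking the union over $i$ then gives the statement. We take as given the correctness of the table built by $\longprogressions()$ from \cite{Erickson1999}: $L[i][j]$ is the maximal $\ell$ such that $S[i], S[i]+d, \dots, S[i]+(\ell-1)d\in S$ with $d=S[j]-S[i]$. We also take as given that $\factorsprime()$ returns, for every difference $d$ occurring in an arithmetic triple, the full set of $p\mid d$ co-factors $\{d/p : p \text{ prime}, p\mid d\}$.

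\textbf{Step 1: candidates are exactly the MAPs starting at $s$.} Each MAP starting at $s$ has second term $S[j]$ for a unique $j>i$, and it has $\geq 3$ terms, so $d\le \lfloor (N-s)/2\rfloor$. Moreover $d\le d_{max}$, since $d$ occurs in a triple. The while loop visits the $j$ in increasing order, and $d$ increases strictly with $j$. Hence every such $j$ is visited before the loop terminates, including the condition $j<n-1$, which only excludes a last element that cannot be a second term of a length-$\ge3$ progression. The test $L[i][j]>2$ ensures at least three terms. The test on line~14 ensures $s-d\notin S$, so the progression is left-maximal. Right-maximality is automatic from the maximality of $L[i][j]$. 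Therefore $(s,d,e)$ reaches line~17 iff it is a MAP with $e=s+d(L[i][j]-1)$.

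\textbf{Step 2: the test in $\isIMAPprime()$.} By Lemma~\ref{lemma:imap-characterization-via-primes}, the MAP is an IMAP iff no MAP with difference $c=d/p$ properly contains it. The key observation is that such a containing MAP must contain $s$. The arithmetic progression starting at $s$ with difference $c$, extended to the right as far as possible within $S$, contains our MAP iff its end $e'\ge e$. Any MAP with difference $c$ containing $s$ coincides with the maximal rightward run from $s$ once it is extended left, so containment reduces to $e'\ge e$. Containment is proper since $c<d$ and the run has more terms. Note also that $s-c\in S$ is allowed here, which is why $\ell[c]$ stores the run length regardless of line~14.

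\textbf{Step 3: $\ell[c]$ is correct at query time.} This is the step I expect to be the main obstacle. We must show that when $d$ is processed, $\ell[c]$ equals the length of the run from $s$ with difference $c$ whenever that length is $\ge 3$, and that $\ell[c]=0$ (or is irrelevant) otherwise. The plan is as follows. Since $c<d$, and a containing run has at least $p(k-1)+1\ge 5$ terms, $s+c\in S$ at some index $j'<j$. That $j'$ was visited earlier in the same row, with $c\le d\le d_{max'}$, so line~13 has set $\ell[c]=L[i][j']$. Conversely, if $\ell[c]$ is positive, it was set from a genuine $L$-entry, so $e'$ is a real run end. If $\ell[c]=0$, either $s+c\notin S$ or the run from $s$ has only two terms. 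In both cases no $c$-run can contain the $\ge3$-term MAP, so returning true is correct. We also need $\ell$ to be reset per row (line~10) and indexable up to $d_{max}$; both hold. Combining Steps 1--3, line~18 fires exactly for IMAPs starting at $s$, which completes the proof.
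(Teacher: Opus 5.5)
Your proposal is correct and follows essentially the same route as the paper's proof: a per-row invariant over $i$, lines 12 and 14 isolating exactly the MAPs that start at $s$, and then the two directions of the \isIMAPprime() test, argued via Lemma~\ref{lemma:imap-characterization-via-primes} and the fact that $\ell[d']$ was already set at the earlier index $j'<j$ with $S[j']=s+d'$. One small correction, which the paper also glosses over: \factorsprime() does not return all co-factors $d/p$, only those that themselves lie in $F$. Nothing is lost, however, because any co-factor $c$ whose run from $s$ contains the MAP has at least $p(k-1)+1\ge 5$ terms, so $c$ is the difference of an arithmetic triple and lies in $F$; and a co-factor $c\notin F$ would have $\ell[c]=0$ anyway.
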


\begin{pf} 
We rely on the correctness of Algs.~\ref{alg:dynamic-programming-algorithm} and \ref{alg:factors} (lines 3--4) proven in sections~\ref{sec:longest-arithmetic-progression-alg} and \ref{sec:factors-alg}, respectively. By the correctness of Alg.~\ref{alg:dynamic-programming-algorithm}, $L[i][j]$ is the length of the longest arithmetic progression in $S$ whose first two terms are $S[i]$ and $S[j]$, and by the correctness of Alg.~\ref{alg:factors}, $F[d]$ is the collection of $p \mid d$ co-factors of $d$ for every distinct prime factor, $p$ of $d$. 

We establish the following invariant for the outer for loop in lines 5--25: At the start of the iteration with index, $i$, the collection, $P^\prime$, contains exactly those IMAPs in $S$ whose starting element is $S[i']$ for some $i' < i$. Before the first iteration ($i = 0$), $P^\prime$ is empty, so the invariant holds trivially. Next, we fix $i$, set $s = S[i]$, and assume the invariant. The inner while loop in lines 11--24 increments $j$ monotonically, visiting each pair, $(i, j)$, with $d := S[j] - S[i] \leq d_{max'} := \min(\lfloor\frac{N-s}{2}\rfloor, d_{max})$ exactly once. The bound, $d_{max'}$, does not exclude any possible candidate as any MAP of length, $k > 2$, starting at $s$ has $d \leq \lfloor\frac{N-s}{2}\rfloor$, and any progression of length, $\geq 3$, in $S$ has $d \leq d_{max}$ by definition of $d_{max}$. Lines 12 and 14 enforce a maximal length, $\geq 3$ with $L[i][j] > 2$ and left-maximality with $s - d \notin S$, respectively, while the right endpoint, $e = s + d(L[i][j] - 1)$, is computed in line 15. Therefore, $\isIMAPprime()$ in line 17 assesses exactly the MAPs in $S$ starting at $s$ and having $d$ and $e$.

We now show that $\isIMAPprime()$ accepts $(s, d)$ if and only if it is an IMAP. Suppose first that $\isIMAPprime()$ returns \textit{false}. Then, for some $d' \in F[d]$, $\ell[d'] > 0$ and $e' := s + d'(\ell[d'] - 1) \geq e$. By line 13, $\ell[d']$ equals $L[i][j']$ for the unique index, $j'$, with $S[j'] = s + d'$. Therefore, $s, s + d', \ldots, e'$ all lie in $S$. Since $d' \mid d$ with $d' < d$, every element belonging to the candidate, $(s,d)$, is an element of this longer progression that properly contains it. Extending this longer progression leftward in $S$ as far as possible yields a MAP in $S$ that still properly contains $(s, d)$, so $(s, d)$ is not an IMAP. Suppose now that $(s, d)$ is not an IMAP. By Lemma~\ref{lemma:imap-characterization-via-primes}, there is a MAP, $B$, with common difference, $d' \in F[d]$, properly containing it. Even when $B$ starts at an element strictly less than $s$, the inclusion, $\{s, s + d, \ldots, e\} \subseteq B$, ensures that $\{s, s + d', s + 2d', \ldots, e\} \subseteq B \subseteq S$, so the longest progression in $S$ starting at $(s, s + d')$ has a rightmost element, $e^* \geq e$, and length, $\geq \frac{e - s}{d'} + 1 \geq 3$. Since $d' < d$, the index, $j'$, with $S[j'] = s + d'$ satisfies $j' < j$, so this progression was visited earlier in the same iteration of $i$, and line 13 sets $\ell[d']$ accordingly. Therefore, $\isIMAPprime()$ returns \textit{false}. Since exactly the IMAPs starting at $s$ are stored in $P^\prime$ during iteration, $i$, the invariant is preserved.

The outer for loop ranges over the finite index collection, $\{0, 1, \ldots, n - 3\}$, the inner while loop strictly increments $j$, which is bounded by $n - 1$, and $\isIMAPprime()$ iterates over the finite collection, $F[d]$, so the algorithm terminates. 

By induction on $i$, after the outer for loop terminates, $P^\prime$ contains exactly the IMAPs of length, $k > 2$, in $S$.\qed
\end{pf}

%%%%%%%%%%%%%%%%
\begin{theorem}[Time complexity]\label{theorem:complexity-proposed}
Let $S$ be a strictly increasing sequence of $n$ positive integers and $N$ be the value of its largest integer where $n\leq N$. Then, the time complexity of Alg.~\ref{alg:proposed-algorithm} is $\bigO\left( n^2 \frac{ \log N }{ \log \log N } + N \right)$.
\end{theorem}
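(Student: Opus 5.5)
I will bound the cost of each phase of Alg.~\ref{alg:proposed-algorithm} separately and then add the bounds.

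\textbf{Preprocessing.} Building the indicator collection $O$ in line 2 takes $\bigO(N)$ time. By the analysis of Alg.~\ref{alg:dynamic-programming-algorithm} in section~\ref{sec:longest-arithmetic-progression-alg}, the call to $\longprogressions()$ in line 3 runs in $\bigO(n^2)$. This is the bound of the original dynamic program of \cite{Erickson1999}. The added bookkeeping, namely recording each distinct difference as a key of $F$ and tracking $d_{max}$, costs $\bigO(1)$ per $(i<j)$ pair. The sieve in $\factorsprime()$ (line 4) costs $\bigO(d_{max})$ for Euler's linear sieve, and $d_{max}\le N$. It then extracts the co-factors of each key $d\in F$ by repeatedly dividing by the smallest prime factor. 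This costs $\bigO(\Omega(d))=\bigO(\log N)$ per key, and there are at most $\bigO(n^2)$ keys. I would argue this more carefully as $\bigO(\omega(d))$ plus the divisions, but either way the sieve phase is subsumed by $\bigO(n^2\frac{\log N}{\log\log N}+N)$. It is cleanest to charge $\omega(d)=\bigO(\frac{\log N}{\log\log N})$ distinct primes per key and to note that stripping repeated prime powers costs at most $\log_2 N$ divisions in total per key. If that overshoots, I would instead observe that the number of distinct keys is also at most $\min(n^2,N)$, so the total is within $\bigO(N+n^2\frac{\log N}{\log\log N})$ after a short case check.

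\textbf{Main loop.} The loops in lines 5--25 visit each pair $(i<j)$ at most once, giving $\bigO(n^2)$ iterations of constant work, excluding $\isIMAPprime()$. One issue is the initialisation of $\ell$ in line 10, which as written costs $\bigO(d_{max})$ per $i$. That would give $\bigO(nN)$, which is too much. I would handle it by resetting only the entries written during iteration $i$. At most $n$ entries are written per $i$, so the total is $\bigO(n^2)$. Alternatively, a timestamp array allocated once in $\bigO(N)$ does the same job. I would state this implementation detail explicitly, since it is what makes the claimed bound hold.

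\textbf{The key step.} Each call $\isIMAPprime(s,\ell,e,F[d])$ iterates over $|F[d]|=\omega(d)$ co-factors with $\bigO(1)$ work each. I need the standard bound on the number of distinct prime factors, $\omega(d)=\bigO(\frac{\log d}{\log\log d})$. It follows because the primorial $p_1\cdots p_r\le d$ together with $p_1\cdots p_r=e^{(1+o(1))r\log r}$ gives $r\log r\lesssim\log d$. Since $d\le N$, each of the at most $\bigO(n^2)$ calls costs $\bigO(\frac{\log N}{\log\log N})$. This step is the main obstacle only insofar as citing the $\omega$ bound correctly \citep{Hardy2008}, together with the monotonicity of $x/\log x$. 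Summing the three phases gives $\bigO(N)+\bigO(n^2)+\bigO(n^2\frac{\log N}{\log\log N})=\bigO\left(n^2\frac{\log N}{\log\log N}+N\right)$.
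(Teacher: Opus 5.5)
Your decomposition matches the paper's: $\bigO(n^2)$ for the modified dynamic program, then the sieve and co-factor extraction in $\factorsprime()$, then $\bigO(n^2)$ pair visits in lines 5--25, then $|F[d]|\le\omega(d)=\bigO(\frac{\log N}{\log\log N})$ work per call to $\isIMAPprime()$. Your main line of argument is correct, and on line 10 it goes beyond the paper. The paper's proof charges lines 5--25 as $\bigO(n)\cdot\bigO(n)$ and never counts the re-initialization of $\ell$. As written, that costs $\Theta(d_{max})$ for each of the $n-2$ values of $i$. A single 3-term progression with difference close to $N/2$ already forces $d_{max}=\Theta(N)$, so this cost is $\Theta(nN)$ in general; for example, it is $n^4$ against the claimed $\bigO(n^3)$ when $N=n^3$. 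Your fix, allocating $\ell$ once and resetting only the entries touched in each iteration (or using timestamps), is needed for the stated bound to hold, so keep it explicit.

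The step to tighten is the co-factor extraction in $\factorsprime()$. You do not pay for stripping prime powers by repeated division. Alg.~\ref{alg:factors} stores in $n_p[t]$ what remains of $t$ after removing the full power of its smallest prime. The loop in lines 23--29 therefore makes exactly $\omega(d)$ constant-time steps per key. With $|F|\le\binom{n}{2}$, this gives $\bigO(n^2\frac{\log N}{\log\log N})$ on top of the $\bigO(N)$ sieve, which is Theorem~\ref{theorem:complexity-factors}. Your fallback would not rescue a $\log_2 N$ per-key charge. Take $|F|\le\min(n^2,N)$ with $n^2\approx N$: it gives $N\log N$, which exceeds the target $N\frac{\log N}{\log\log N}+N$ by a $\log\log N$ factor. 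Drop that hedge and commit to the $\omega(d)$ count.
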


\begin{pf}
The total work by the algorithm is carried out in the procedures detailed below and presented in scope order from top to bottom.
\begin{enumerate}
    \item The modified $\longprogressions()$ algorithm, Alg.~\ref{alg:dynamic-programming-algorithm}, originally from \cite{Erickson1999} and called in line 3 has a complexity of $\bigO(n^2)$. The proof is provided in section~\ref{sec:longest-arithmetic-progression-alg}.
    \item \sloppy The $\factorsprime()$ procedure, Alg.~\ref{alg:factors}, based on Euler's Linear Sieve algorithm and called in line 4 is $\bigO \left( N + \min \left( n^2 \frac{ \log N }{ \log \log N } , N \log\log N \right) \right)$. The proof is provided in section~\ref{sec:factors-alg}.
    \item The outer for loop in lines 5--25 traverses $n-3$ elements of $S$ once, which is $\bigO(n)$.
    \item The nested while loop in lines 11--24 traverses at most $n-1$ elements of $S$, which is $\bigO(n)$.
    \item Finally, the nested for-each loop in $\isIMAPprime$ (lines 17 and 27--37) traverses, for a given $d$, all $p\mid d$ co-factors (for all $p$ where $p$ is a distinct prime factor of $d$) previously computed from (2). This is equal to the number of prime factors of $d$, which is $\bigO\left(\frac{ \log d }{ \log \log d }\right)$ in general. Since $d \le N$, we obtain a runtime of  $\bigO\left(\frac{ \log N }{ \log \log N }\right)$.
\end{enumerate}
\sloppy From the above, the total time of Alg.~\ref{alg:proposed-algorithm} can be expressed as
\[
    T_2( n, N ) = \bigO \left( n^2 + \left( N + \min \left( n^2 \frac{ \log N }{ \log \log N } , N \log\log N \right)\right) + n^2\frac{ \log N }{ \log \log N } \right) = \bigO\left( n^2 \frac{ \log N }{ \log \log N } + N \right).
\]

\noindent which we can simplify to $\bigO(n^2N^{o(1)}+N)$. While $n^2 \frac{ \log N }{ \log \log N }$ dominates $N$ whenever $n=\sqrt{N\frac{\log\log N}{\log N}}$, without a bound on $N$ with respect to $n$, either term may dominate the running time over the other depending on the specific values of $n$ and $N$. Therefore, both terms are retained.\qed
\end{pf}

\begin{theorem}[Time complexity refined]\label{theorem:complexity-proposed-refined}
\sloppy Let $S$ be a strictly increasing sequence of $n$ positive integers, and $N$ be the value of its largest integer with $n \le N$. Then, the refined time complexity of Alg.~\ref{alg:proposed-algorithm} is $\bigO\left(\min\left( n^2 \frac{\log N}{\log\log N}, nN\log \log N \right) + N \right)$.
\end{theorem}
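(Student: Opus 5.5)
Most of the accounting in Theorem~\ref{theorem:complexity-proposed} carries over unchanged. The call to $\longprogressions()$ in line 3 costs $\bigO(n^2)$. The sieve in $\factorsprime()$ costs $\bigO\left(N + \min\left(n^2\frac{\log N}{\log\log N}, N\log\log N\right)\right)$, and that minimum is already at most the claimed $\min\left(n^2\frac{\log N}{\log\log N}, nN\log\log N\right)$. What remains is the double loop in lines 5--25 together with the calls to $\isIMAPprime()$. Theorem~\ref{theorem:complexity-proposed} bounds that part by $\bigO\left(n^2\frac{\log N}{\log\log N}\right)$, using $\bigO(n^2)$ pairs $(i,j)$ and $\bigO\left(\frac{\log N}{\log\log N}\right)$ distinct prime factors per difference. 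So I would only need to prove a second bound, $\bigO(nN\log\log N)$, for the same loops, plus the $\bigO(n^2)$ overhead, and then take the minimum of the two.

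For the second bound I would fix a row $i$ and drop the pair count altogether. In row $i$ the differences $d=S[j]-S[i]$ are distinct positive integers, each at most $d_{max'}\le N$. So the work of $\isIMAPprime()$ over the whole row is at most $\sum_{d\le N}\omega(d)$, where $\omega(d)$ is the number of distinct prime factors of $d$; this is exactly $|F[d]|$. By the classical estimate $\sum_{d\le N}\omega(d)=\sum_{p\le N}\lfloor N/p\rfloor = \bigO(N\log\log N)$ (Mertens), the cost of $\isIMAPprime()$ in one row is $\bigO(N\log\log N)$. Summing over the $n$ rows gives $\bigO(nN\log\log N)$. The loop overhead is $\bigO(n\cdot\min(n,N))$, which is absorbed either way since $n\le N$. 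Line 10 also re-initialises $\ell$ in every row at cost $\bigO(d_{max})=\bigO(N)$, so that contributes $\bigO(nN)$, also within $nN\log\log N$.

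The main obstacle is the reinitialisation of $\ell$ in line 10. Under the first branch it could cost $nN$, which is not dominated by $n^2\frac{\log N}{\log\log N}$ when $N\gg n$. I would handle this by observing that the entries of $\ell$ written in row $i$ (at most $n$ of them, via line 13) can be reset at the end of the row, so the reset costs $\bigO(n)$ per row. This needs only one global $\bigO(N)$ allocation, which the additive $+N$ term covers. Equivalently, line 10 is implemented lazily without changing the algorithm's behaviour, which matches the paper's stated implementation conventions. With that, both branches hold, and combining them with the $\bigO(n^2)$ and sieve terms gives $\bigO\left(\min\left(n^2\frac{\log N}{\log\log N}, nN\log\log N\right)+N\right)$.
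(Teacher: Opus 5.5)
Your proof is correct, but it reaches the $nN\log\log N$ branch by a different count than the paper. The paper charges only the actual calls of \isIMAPprime{}, one per MAP start $(s,d)$. It groups these calls by difference and bounds $c_d\le\lfloor n/3\rfloor$ using the disjointness of same-difference MAPs (Lemma~\ref{auxil-lem}). It then rewrites $\sum_d c_d\,\omega(d)$ as $\sum_{p\le N}\psi(p)$ with $\psi(p)\le\frac{nN}{3p}$ and applies Mertens. You instead group by row, using only that the differences $S[j]-S[i]$ within a fixed row are distinct, so each row costs at most $\sum_{d\le N}\omega(d)$. This is the same double count with the roles of $s$ and $d$ exchanged. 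The paper's version gains the constant $\frac13$; yours needs no auxiliary lemma. Likewise, for the other branch, counting the pairs $(i,j)$ replaces the appeal to Lemma~\ref{lemma:lower-bound-MAPs}. A small correction: $|F[d]|\le\omega(d)$ rather than equality, because Alg.~\ref{alg:factors} keeps only co-factors lying in $F$; you only use the upper bound, so nothing breaks.

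The more substantive difference is that you account for the re-initialisation of $\ell$ in line 10, which neither the paper's proof of this theorem nor that of Theorem~\ref{theorem:complexity-proposed} mentions. Taken literally, it costs $\Theta(n\,d_{max})$ in total. Here $d_{max}$ can be $\Theta(N)$, e.g.\ for $S=\langle 1,2,\ldots,n-2,\frac{N+1}{2},N\rangle$ with $N$ odd. For $N=\Theta(n^2)$ this gives $\Theta(n^3)$, which is not dominated by $n^2\frac{\log N}{\log\log N}+N$. So the first branch of the stated bound genuinely depends on an implementation like yours: one $\bigO(N)$ allocation, then resetting only the at most $n$ entries written in each row. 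That change leaves the algorithm's behaviour unchanged. On this point your argument is more complete than the paper's.
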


\begin{pf}
By the proof of Theorem~\ref{theorem:correctness-proposed}, $\isIMAPprime()$ is called exactly once for each pair, $(s, d)$, that begins a MAP of length, $k > 2$. The collection of all MAPs we will denote by $M$. For a fixed $(s,d)\in M$, $\isIMAPprime$ traverses the list of $p\mid d$ co-factors previously computed by $\factorsprime()$, the number of which is $\omega(d)$. Across all calls, the total cost is thus $\sum_{(s,d)\in M} \omega(d) = \sum_d c_d\,\omega(d)$, where $c_d$ denotes the number of MAPs in $S$ with common difference, $d$. We bound this sum in the following two ways.

\begin{enumerate}
    \item \sloppy Every $d\le N$ has $\omega(d) = \bigO\left(\frac{\log N}{\log \log N}\right)$, which is the maximal order of $\omega$ \citep{Hardy2008} for numbers having many prime factors (e.g., primorials). By Lemma~\ref{lemma:lower-bound-MAPs}, the total number of MAPs in $S$ is $\sum_d c_d \le \tfrac{1}{2}\binom{n}{2} = \bigO(n^2)$. Therefore, $\sum_{(s,d)\in M} \omega(d) = \bigO\left(n^2\frac{\log N}{\log \log N}\right)$.
    \item On the other hand, consider the function, $\psi(p) = |\{(s,d)\in M : p\mid d\}|$, and recall from the proof of Lemma~\ref{auxil-lem} that MAPs having the same common difference, $d$, are pairwise disjoint subsets of $S$ each of length $k > 2$, and their number is at most $\lfloor \frac{n}{3} \rfloor$. It follows that $c_d \le \lfloor \frac{n}{3} \rfloor$ and since all possible multiples of $p$ lie in $\{1,2,\dots,N\}$ and number at most $\lfloor \frac{N}{p} \rfloor$, then $\psi(p) \le \lfloor \frac{n}{3} \rfloor \cdot \lfloor \frac{N}{p} \rfloor \le \frac{nN}{3p}$. Therefore, we obtain
    \begin{align*}
        \sum_{(s,d)\in M} \omega(d) &= \sum_{(s,d)\in M} |\{(p,d) : p\text{ prime}, p\mid d\}| \\
        &= \sum_{p\le N} \psi(p) \\
        &\le \sum_{p\le N} \frac{nN}{3p} \\
        &= \frac{nN}{3} \sum_{p\le N} \frac{1}{p} \\
        &= \frac{nN}{3}\left(\log\log N + A + o\!\left(\frac{1}{\log N}\right)\right) \\
        &= \frac{nN}{3}\log \log N + \bigO(nN)
    \end{align*}
    by Mertens' second theorem (where $A$ is the Meissel--Mertens constant) \citep{Hardy1917,Hardy2008,Chen2025}, so $\sum_{(s,d)\in M} \omega(d) = \bigO(nN\log\log N)$.
\end{enumerate}
Taking the smaller of (1) and (2) leaves $\bigO\left(\min\left(n^2\frac{\log N}{\log \log N}, nN\log \log N\right)\right)$. Substituting this quantity for the third term in the full expression of $T_2(n,N)$ in Theorem~\ref{theorem:complexity-proposed} (i.e., procedures (3) to (5)) gives a total refined time for Alg.~\ref{alg:proposed-algorithm} of
\[
    T_2'(n,N) = \bigO\left(\min\left( n^2\frac{\log N}{\log\log N},\, nN\log\log N \right) + N \right).\hfill\qed
\]
\end{pf}

\begin{corollary}[Polynomial-growth regime]\label{sec:cor-complexity}
\sloppy If $N=\Theta\left(n^c\right)$ for some constant, $c\geq1$, then by Theorem~\ref{theorem:complexity-proposed}, $n^{2+o(1)}$ will dominate whenever $c\leq2$ while $n^c$ (i.e., $N$) will dominate whenever $c>2$ up to a poly-logarithmic factor. For $c=1$ specifically, Theorem~\ref{theorem:complexity-proposed-refined} tightens $n^{2+o(1)}$ to explicitly $\bigO(n^2 \log\log n)$. Without further information regarding $c$, the time complexity of Alg.~\ref{alg:proposed-algorithm} is $\bigO(\max(n^{2+o(1)}, n^c)) = \bigO(n^{\max(2+o(1), c)})$.
\end{corollary}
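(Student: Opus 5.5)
The corollary follows by substituting $N=\Theta(n^c)$ into the bounds of Theorem~\ref{theorem:complexity-proposed} and Theorem~\ref{theorem:complexity-proposed-refined}, then comparing the two summands.

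First, from Theorem~\ref{theorem:complexity-proposed} we have $T_2(n,N)=\bigO\left(n^2\frac{\log N}{\log\log N}+N\right)$. With $N=\Theta(n^c)$ we get $\log N=\Theta(c\log n)=\Theta(\log n)$ and $\log\log N=\Theta(\log\log n)$, since $c$ is a constant. Hence $\frac{\log N}{\log\log N}=\Theta\left(\frac{\log n}{\log\log n}\right)=n^{o(1)}$, and the bound becomes $\bigO(n^{2+o(1)}+n^c)$, with the $o(1)$ hiding only a polylogarithmic factor.

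Next, we split into cases on $c$.
\begin{itemize}
\item If $c<2$, then $n^c=o(n^2)$, so the first term dominates.
\item If $c>2$, then $n^c/n^2=n^{c-2}$ grows polynomially and outgrows any $\frac{\log n}{\log\log n}$ factor, so $N$ dominates.
\item At $c=2$ the two terms differ only by the factor $\frac{\log n}{\log\log n}$, so $n^{2+o(1)}$ dominates, up to a polylogarithmic factor.
\end{itemize}
This gives the dichotomy in the statement. Combining the cases yields $\bigO(\max(n^{2+o(1)},n^c))=\bigO(n^{\max(2+o(1),c)})$, since $\max(n^a,n^b)=n^{\max(a,b)}$ for $n\ge 1$.

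For the refinement at $c=1$, we invoke Theorem~\ref{theorem:complexity-proposed-refined}, whose bound is $\bigO\left(\min\left(n^2\frac{\log N}{\log\log N},\,nN\log\log N\right)+N\right)$. With $N=\Theta(n)$ the second argument of the min is $\Theta(n^2\log\log n)$, while the first is $\Theta\left(n^2\frac{\log n}{\log\log n}\right)$. Since $(\log\log n)^2=o(\log n)$, the minimum is $\bigO(n^2\log\log n)$. The additive $N=\Theta(n)$ term is absorbed, giving $\bigO(n^2\log\log n)$.

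I expect no real obstacle. The step needing the most care is the wording: I will interpret ``up to a poly-logarithmic factor'' as referring to the boundary case $c=2$, and treat the hidden constants in $\Theta(n^c)$ as absorbed into $\log N=\Theta(\log n)$.
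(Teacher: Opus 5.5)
Your proposal is correct and follows the same route as the paper. The paper gives no separate proof of this corollary; it is obtained exactly as you do it, by substituting $N=\Theta(n^c)$ into Theorem~\ref{theorem:complexity-proposed} (using $\log N/\log\log N=\Theta(\log n/\log\log n)=n^{o(1)}$ and comparing $n^2$ with $n^c$), and, for $c=1$, by bounding the minimum in Theorem~\ref{theorem:complexity-proposed-refined} by its second argument, $nN\log\log N=\Theta(n^2\log\log n)$. Your comparison $(\log\log n)^2=o(\log n)$ is harmless but unnecessary, since a minimum is always at most either of its arguments.
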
 

Importantly, if we do consider a dense regime where $n$ grows linearly with $N$ (equivalently, $c=1$), the $\bigO(n^2 \log\log n)$ time complexity of Alg.~\ref{alg:proposed-algorithm} by Theorem~\ref{theorem:complexity-proposed-refined} is an improvement over Alg.~\ref{alg:existing-algorithm} that is asymptotically better than $n$.

\begin{proposition}[Lower bound]\label{prop:lower-bound}
Any algorithm that enumerates all IMAPs contained in a strictly increasing sequence, $S$, of $n$ positive integers must perform at least $\Omega(n^2)$ queries in the 3-linear decision tree model. Therefore, Alg.~\ref{alg:proposed-algorithm} is time-optimal up to a factor of $n^{o(1)}$ relative to this lower bound whenever $N=\Theta(n)$.
\end{proposition}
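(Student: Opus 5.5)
The plan is to reduce a known $\Omega(n^2)$ problem to IMAP enumeration by a transformation that costs no comparisons. The natural source is the lower bound of Erickson~\cite{Erickson1999}, which shows that even deciding whether a set of $n$ integers contains a 3-term arithmetic progression requires $\Omega(n^2)$ queries in the 3-linear decision tree model. This is the same model in which the dynamic-programming algorithm underlying Alg.~\ref{alg:dynamic-programming-algorithm} is optimal. So I would show that any IMAP-enumerating algorithm also decides the existence of a 3-term arithmetic progression, with no extra queries beyond those it already makes.

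\textbf{Step 1.} Sorting may be assumed free here, because our inputs are already strictly increasing. Erickson's adversary argument applies to sorted inputs, since the hard instances for 3-term progressions are sets whose order is fixed. I will make this explicit by noting that the adversary only perturbs the values while preserving their order.

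\textbf{Step 2.} Show that $S$ contains a 3-term arithmetic progression if and only if it contains at least one IMAP. If $s, s+d, s+2d \in S$, extend this progression left and right within $S$ to obtain a MAP of length at least 3. Among all MAPs containing it, take one that is maximal under proper containment. This exists because there are finitely many MAPs, and it is an IMAP. The converse is immediate, since every IMAP has $k>2$ elements. Hence the output of any enumeration algorithm is nonempty exactly when the answer to the 3-term progression decision problem is yes, so its decision tree solves that problem and must have depth $\Omega(n^2)$.

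\textbf{Step 3 (optimality).} Combine this with Theorem~\ref{theorem:complexity-proposed-refined}. When $N=\Theta(n)$, the running time is $\bigO(n^2\log\log n + N) = \bigO(n^2\log\log n)$, and $\log\log n = n^{o(1)}$, which gives optimality up to an $n^{o(1)}$ factor. One caveat is that Alg.~\ref{alg:proposed-algorithm} is analysed in the word-RAM model rather than the decision tree model, so the comparison is informal. I would phrase it as matching the query lower bound up to $n^{o(1)}$.

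The main obstacle is Step 1: making sure Erickson's lower bound really transfers to the restricted input class. This class consists of strictly increasing positive integers with $N=\Theta(n)$, whereas the adversary argument typically uses real inputs or large integers. I would first try citing Erickson's theorem directly for sorted integer inputs with the model left unrestricted in $N$, and state the lower bound without the $N=\Theta(n)$ restriction. The restriction would then be used only in the optimality claim.
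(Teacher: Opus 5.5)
Your proof is correct. It follows the same template as the paper's proof but reduces from a different source problem. The template is a post-processing of the enumerator's output that costs no extra queries, applied to a problem Erickson showed needs $\Omega(n^2)$ 3-linear queries.

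The paper reduces from the longest-progression problem. A longest progression of length $>2$ cannot be extended or properly contained in another progression, so it is an IMAP, and taking the maximum length over the output answers that problem. You reduce instead from deciding whether any 3-term progression exists. This uses the fact that every 3-term progression lies in some IMAP (your finite-maximality argument in Step~2 is right), and you read off only whether the output is empty.

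Your source problem is the more primitive one. Detecting a 3-term progression is the 3-linear satisfiability problem $x+y=2z$. Its $\Omega(n^2)$ bound comes from Erickson's work on lower bounds for linear satisfiability problems, which is stated for real inputs and is not the longest-progression paper the authors cite. The longest-progression bound is a consequence of it, so you skip one layer of indirection. The paper's choice has the mild advantage of relying only on the reference it already uses.

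Two remarks on the issues you flag.

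First, Step~1 needs no inspection of Erickson's adversary. Presorting costs $\bigO(n\log n)$ comparisons, which are 2-linear queries, and relabelling variables by the resulting permutation keeps every query 3-linear. So an $o(n^2)$-depth tree for sorted inputs would yield one for arbitrary inputs.

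Second, your caveat about $N=\Theta(n)$ is sharper than you suggest. On integer inputs from $\{1,\ldots,N\}$, a linear decision tree can determine every $s_i$ by binary search using $\bigO(n\log N)$ one-variable queries, after which the entire output is fixed. Hence no $\Omega(n^2)$ query bound can hold on that restricted class. The paper's proof silently compares a running time on the restricted class with a lower bound on the unrestricted one. Stating the bound without the restriction and calling the comparison informal, as you do, is the accurate reading. If you want a lower bound that genuinely holds in the regime $N=\Theta(n)$, output size gives it: exhibit inputs with $N=\Theta(n)$ and $\Theta(n^2)$ IMAPs, as in Remark~\ref{rmk:output-size}.
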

 
\begin{pf}
We reduce the problem of finding the single longest arithmetic progression, which is shown in \citep{Erickson1999} to require $\Omega(n^2)$ queries in the 3-linear decision tree model, to the problem of enumerating IMAPs. By definition, the longest arithmetic progression of length, $k>2$, in $S$ is an IMAP, so it appears in the output of any correct algorithm for enumerating IMAPs, where its length equals the maximum length over the collection of enumerated IMAPs (and equals 2 if no IMAPs exist). If an algorithm, $\mathcal{A}$, correctly enumerates all IMAPs, one may append a post-processing procedure that scans the output of $\mathcal{A}$ and returns the IMAP of maximum length. Since scanning requires reading only the already-produced output with no further queries to the input, $\mathcal{A}$ solves the longest arithmetic progression problem using exactly the same number of 3-linear queries. Combined then with Theorem~\ref{theorem:complexity-proposed}, when $N=\Theta(n)$, the running time of $\bigO(n^{2+o(1)})$ exceeds the $\Omega(n^2)$ decision-tree lower bound by only the factor, $\log\log n$, so Alg.~\ref{alg:proposed-algorithm} is time-optimal up to $\log\log n$.\qed
\end{pf}

\begin{rmk}[Filtering with few MAPs]\label{rmk:few-maps} When the number of MAPs, $m'$, grows only linearly with $n$ (i.e., $m' = O(n)$), a post-process filtering procedure can enumerate all IMAPs from the collection of MAPs produced by Alg.~\ref{alg:dynamic-programming-algorithm} in at most $\bigO(n^2 + m'^2)=\bigO(n^2)$ time. This avoids the $\bigO(n^2 \frac{ \log N }{ \log \log N } + N )$ dominating cost of Alg.~\ref{alg:proposed-algorithm}, recovering quadratic complexity and remaining time-optimal by Proposition~\ref{prop:lower-bound}. The exact cross-over point at which a brute-force pair-wise filtering of all MAPs becomes more efficient is $m'=\Theta\left(\max\left(n \sqrt{\frac{ \log N }{ \log \log N }}, \sqrt N\right)\right)$, however, we show in section~\ref{sec:expected-value-bounds} that the expected order of growth for IMAPs means filtering is rarely the more efficient approach. For example, when $N$ grows linearly with $n$, the expected number of IMAPs grows quadratically with $n$, which makes filtering not at all efficient in this regime. While we further show empirically in section~\ref{sec:experiments} that the number of IMAPs and thus MAPs decrease after some point as the density of the sequence decreases (i.e., as $c>1$ increases in $N=\Theta(n^c)$), a precise characterization of this relationship remains open.
\end{rmk}

\begin{rmk}[Space complexity]
The overall space complexity of Alg.~\ref{alg:proposed-algorithm} is dominated by the lookup table, $L$, the ordered collection, $O$, of indicator values for $S$, or the collection of $p\mid d$ co-factors for all discovered common differences (i.e., the variable $F$), and is therefore $\bigO\left(n^2+N+\min \left( n^2 \frac{ \log N }{ \log \log N } , N \log\log N \right)\right)$. In comparison to Alg.~\ref{alg:existing-algorithm}, which has a space complexity of $\bigO(n^2+N)$ (see Remark~\ref{rem:existing-algorithm}), the complexity of Alg.~\ref{alg:proposed-algorithm} is equivalent up to a poly-logarithmic factor (since its third term never exceeds $N\log\log N$), with both exactly equivalent at $\Theta(n^2)$ when $N = \Theta(n)$.
\end{rmk}

%%%%%%%%%%%
\subsection{Algorithm 3}\label{sec:longest-arithmetic-progression-alg}

Alg.~\ref{alg:dynamic-programming-algorithm} is called in line 3 of Alg.~\ref{alg:proposed-algorithm} and is a slightly modified version of the dynamic programming algorithm first introduced in \citep{Erickson1999} for finding the longest arithmetic progression. As originally presented, the correctness and time complexity of the algorithm were stated but not proved, so we provide proofs of both below for completeness.

\begin{algorithm}[!t]
\small
\DontPrintSemicolon
\SetProcNameSty{textsc}
\SetProcArgSty{textsc}
\renewcommand{\SetProgSty}[1]{\renewcommand{\ProgSty}[1]{\textnormal{\csname#1\endcsname{##1}}\unskip}}%
\SetProgSty{textsc}
\SetKwData{Left}{left}\SetKwData{This}{this}\SetKwData{Up}{up}
\SetKwInOut{Input}{input}\SetKwInOut{Output}{output}\SetKwInOut{Initialize}{Initialize:}
\Input{An ordered collection, $S$, of $n$ strictly increasing positive integers where $N:= s_{n-1} =\max(S)$}
\Output{A 2D ordered collection, $L$, containing for each $(i<j)$ pair in $S$, the longest length of an arithmetic progression beginning from this pair of elements, and an unordered key-value collection, $F$, with a key for each distinct common difference across all discovered arithmetic triples}
\SetKwProg{Fn}{Function}{}{end}
\Fn{\longprogressions($S$)}{
$L\leftarrow\langle\rangle\oplus(\langle\rangle\oplus 0, j \leftarrow 0,1,\ldots,n-1), i \leftarrow 0,1,\ldots,n-1$\Comment*[r]{$(i<j)$ AP lengths}
\nlnonumber
\nlast $F \leftarrow \lbrace\rbrace$\Comment*[r]{e.g., $\lbrace\langle 2 : \,\langle\rangle\rangle, \ldots\rbrace$}
\nlnonumber
\nlast $d_{max} \leftarrow 0$\Comment*[r]{Largest common difference}
\For(\Comment*[f]{O-based indexing}){$j \leftarrow n - 2$ \KwTo $1$}{
    $i\leftarrow j-1$\; %; k\leftarrow j+1
    $k\leftarrow j+1$\;
    \While{$i \geq 0$ {\bf and} $k < n$}{
        \uIf(\Comment*[f]{\tikz\draw[black,fill=black] (0,0) circle (.4ex);\quad\quad\tikz\draw[black,fill=black] (0,0) circle (.4ex);\quad\tikz\draw[black,fill=black] (0,0) circle (.4ex);}){$S[i] + S[k] < 2 \cdot S[j]$}{
            $k\leftarrow k+1$
        }
        \uElseIf(\Comment*[f]{\tikz\draw[black,fill=black] (0,0) circle (.4ex);\quad\tikz\draw[black,fill=black] (0,0) circle (.4ex);\quad\quad\tikz\draw[black,fill=black] (0,0) circle (.4ex);}){$S[i] + S[k] > 2 \cdot S[j]$}{
            $L[i][j] \leftarrow 2$\;
            $i\leftarrow i-1$
        }
        \Else(\Comment*[f]{AP triple found}){
            $L[i][j] \leftarrow \max(2, L[j][k])+1$\Comment*[r]{Update length}
            \nlnonumber
            \nlast\If(\Comment*[f]{New AP common difference found}){$S[j]-S[i] \notin F$}{
                \nlnonumber
                \nlast$F[S[j]-S[i]] \leftarrow \langle\rangle$
            }
            \nlnonumber
            \nlast\If(\Comment*[f]{Larger AP common difference found}){$S[j]-S[i] > d_{max}$}{
                \nlnonumber
                \nlast$d_{max} \leftarrow S[j]-S[i]$
            }
            $i\leftarrow i-1$\;
            $k\leftarrow k+1$
        }
    }
   \While{$i \geq 0$}{
        $L[i][j] \leftarrow 2$\;
        $i\leftarrow i-1$
   }
}
\nlnonumber
\nlast \Return $L, F, d_{max}$
}
\caption{Modified dynamic programming algorithm from \citep{Erickson1999} for finding the length of the single longest arithmetic progression in a strictly increasing integer sequence. Note that asterisks indicate modifications to the original algorithm as used to enumerate inclusion-maximal arithmetic progressions (IMAPs) with the proposed algorithm shown in Alg.~\ref{alg:proposed-algorithm}.}\label{alg:dynamic-programming-algorithm}
\end{algorithm}

%%%%%%%%%%%%%%%%
\begin{theorem}[Correctness]\label{theorem:correctness-dynamic-programming}
For every strictly increasing sequence of $n$ positive integers, $S$, Alg.~\ref{alg:dynamic-programming-algorithm} computes, for all $(i<j)$, the length, $L[i][j]$, of the longest arithmetic progression in $S$ beginning from $(s_i,s_j)$. Moreover, the algorithm computes the collection, $F$, of all distinct common differences belonging to arithmetic progressions triples in $S$, and the maximum such difference, $d_{max}$. 
\end{theorem}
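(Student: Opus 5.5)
We will argue by a loop invariant on the outer for loop, which processes $j$ from $n-2$ down to $1$. The invariant is: at the start of the iteration with value $j$, every entry $L[j'][k]$ with $j' > j$ and $j'<k$ already holds the length of the longest arithmetic progression in $S$ whose first two terms are $(s_{j'},s_k)$. Correctness of $L[i][j]$ then rests on a one-step recurrence. The longest progression starting $(s_i,s_j)$ has length $2$ if $2s_j-s_i\notin S$. Otherwise it has length $1+L[j][k]$, where $k$ is the unique index with $s_k=2s_j-s_i$. Uniqueness follows from strict monotonicity, and the recurrence holds because a progression with first two terms $(s_i,s_j)$ is exactly $s_i$ prepended to one starting with $(s_j,s_k)$. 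Since $k>j$, the entry $L[j][k]$ is available by the invariant; for the last row $j=n-1$ no pairs $(n-1,k)$ exist. The expression $\max(2,L[j][k])+1$ handles the boundary case where $L[j][k]$ is still at its initial value $0$. That can only happen for $k=n-1$, where the correct value is $2$, so the update yields $3$ as required.

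The main step, and the expected obstacle, is showing that the two-pointer while loop with pointers $i\downarrow$ and $k\uparrow$ assigns every $L[i][j]$, $i<j$, its correct value. Here I would follow the standard 3SUM-style argument. Fix $j$. I would prove that whenever $i$ is decremented, either the current $k$ satisfies $s_i+s_k=2s_j$, or no index $k'>j$ with $s_{k'}=2s_j-s_i$ exists. Suppose $i$ is decremented in the ``$>$'' branch. Every $k'\ge k$ gives $s_i+s_{k'}\ge s_i+s_k>2s_j$. Every $k'<k$ with $k'>j$ was skipped earlier, at some index $i''\ge i$ with $s_{i''}+s_{k'}<2s_j$, which implies $s_i+s_{k'}<2s_j$ because $s_i\le s_{i''}$. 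So no partner exists and $L[i][j]=2$ is correct. Symmetrically, when $k$ is advanced no partner is lost for any remaining $i$. Finally, the trailing while loop assigns $2$ to the leftover indices $i$ once $k$ reaches $n$. This is correct because all $k'>j$ were exhausted with sums below $2s_j$ for larger $i$, and hence also for smaller $i$. Every $i$ from $j-1$ down to $0$ is thus visited exactly once.

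For $F$ and $d_{max}$: the else-branch fires exactly when $(s_i,s_j,s_k)$ is an arithmetic triple. By the previous step, every triple with middle element $s_j$ is detected during iteration $j$. So the keys of $F$ are exactly the differences $s_j-s_i$ over all triples, i.e. the common differences of all length-$\ge3$ progressions, since any such progression contains a triple with its difference. The variable $d_{max}$ is a running maximum over the same set. Termination is immediate because each loop strictly moves $i$ or $k$ toward a bound.
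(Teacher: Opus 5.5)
Your proposal is correct and follows essentially the paper's argument: the same invariant over the decreasing outer loop on $j$, the recurrence $L^*(i,j)=L^*(j,k)+1$ with the $\max(2,\cdot)$ fix at $k=n-1$, the same two-pointer case analysis (including the trailing loop), and the same reasoning for $F$ and $d_{max}$. There are two small slips to tidy. First, column $n-1$ is never written and stays $0$, so your invariant should range only over columns $k\le n-2$, as the paper's does; your own boundary remark already contradicts the unrestricted version, and the final conclusion likewise holds only for $j\le n-2$. Second, a $k'$ may also have been passed in the equality branch, where $s_{i''}+s_{k'}=2s_j$ with $i''>i$ strictly; this still gives $s_i+s_{k'}<2s_j$, but your sentence as written only covers the ``$<$'' branch.
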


\begin{pf}
We begin by defining $L^*(i, j)$ to be the length of the longest arithmetic progression in $S$ beginning at $S[i]$ and $S[j]$, with $L^*(i, j) = 2$ when no extension of length, $\geq 3$, exists.
 
We establish the following invariant. At the start of each iteration of the outer for loop in lines 5--30 with index, $j$ (processed in decreasing order from $n - 2$ to $1$): for every $j' \in \{j + 1, \ldots, n - 2\}$ and $i' < j'$, $L[i'][j'] = L^*(i', j')$, $F$ is the collection of common differences belonging to all arithmetic triples in $S$ with middle index in $\{j + 1, \ldots, n - 2\}$, and $d_{max} = \max(F \cup \{0\})$. Before the first iteration ($j = n - 2$), $F$ is empty and $d_{max} = 0$, so the invariant holds trivially. Next, we fix $j$ and assume the invariant. The two-pointer traversal in lines 8--25 starts at $i = j - 1$ and $k = j + 1$. The strict monotonicity of $S$ means, for each $i < j$, at most one $k > j$ with $S[i] + S[k] = 2 \cdot S[j]$, and the satisfying $S[k^*]$ strictly increases as $i$ decreases. We thus encounter the following three cases. When $S[i] + S[k] < 2 \cdot S[j]$, any satisfying, $k^*$, for $i$ means $k^* > k$, so incrementing $k$ does not miss any potential arithmetic triple. When $S[i] + S[k] > 2 \cdot S[j]$, no arithmetic triple for $i$ exists since $k$ only increases, and thus $L^*(i, j) = 2$ (line 12). When $S[i] + S[k] = 2 \cdot S[j]$, the terms, $S[i]$, $S[j]$, and $S[k]$, are an arithmetic triple, so $L^*(i, j) = L^*(j, k) + 1$. For $k \leq n - 2$, the invariant means $L[j][k] = L^*(j, k) \geq 2$, so in line 15 $L[i][j]$ becomes the larger of 2 or $L[j][k]$ then plus 1, which equals $L^*(i, j)$. For $k = n - 1$, $L[j][k] = 0$ (as column $n - 1$ is not modified) and $L^*(j, k) = 2$ since $S[n - 1] = \max(S)$, and thus $L[i][j] = 3$, which equals $L^*(j, k) + 1 = L^*(i, j)$. Lines 16--21 then store, $d = S[j] - S[i]$, in $F$ if it is absent and update $d_{max}$. If the outer for loop halts due to $k = n$, the second inner while loop in lines 26--29 sets $L[i'][j] = 2$ to all remaining $i'$. Since the current $i'$ satisfies $S[i'] + S[n - 1] < 2 \cdot S[j]$, the satisfying $S[k^*]$ equals $2 \cdot S[j] - S[i']$, which strictly exceeds $S[n - 1] = \max(S)$, and since $S[k^*]$ only increases as $i'$ decreases, the same strict inequality holds for every smaller remaining $i'$. Therefore, no $k^*$ exists in $S$ for any remaining $i'$, meaning $L^*(i', j) = 2$. Column $j$ is thus correctly filled, and the common differences stored in $F$ during this iteration are precisely those corresponding to arithmetic triples with a middle index, $j$, and the invariant extends to $j$.
 
The outer for loop is finite since each branch of the main while loop strictly decrements $i$ or increments $k$ within bounded ranges, while the second inner while loop strictly decrements $i$, so the algorithm terminates.

By induction, $L[i'][j'] = L^*(i', j')$ for all $i' < j' \leq n - 2$. Every progression of length, $> 2$, in $S$ contains an arithmetic triple having the same common difference, so $F$ contains the collection of common differences of all such progressions, and $d_{max}$ is their maximum.\qed
\end{pf}

%%%%%%%%%%%%%%%%
\begin{theorem}[Time complexity]\label{theorem:complexity-dynamic-programming}
Let $S$ be a strictly increasing sequence of $n$ positive integers and $N$ be the value of its largest integer where $n\leq N$. Then, the time complexity of Alg.~\ref{alg:dynamic-programming-algorithm} is $\bigO(n^2)$.
\end{theorem}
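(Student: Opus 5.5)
The plan is to charge every unit of work in Alg.~\ref{alg:dynamic-programming-algorithm} either to an initialization cost of $\bigO(n^2)$ or to a potential that changes monotonically within one outer-loop iteration. The bound then follows by summing over the $n-2$ values of $j$.

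\textbf{Initialization.} Building the $n\times n$ table $L$ in line 2 costs $\Theta(n^2)$. Setting up $F$ and $d_{max}$ costs $\bigO(1)$.

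\textbf{One outer iteration.} Fix $j$ and consider the two-pointer while loop in lines 8--25. The pointer $i$ starts at $j-1$ and only decreases, never below $0$. The pointer $k$ starts at $j+1$ and only increases, never reaching $n$.

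Each branch strictly changes at least one pointer: the first branch increments $k$, the second decrements $i$, and the third does both. So the quantity $(i+1)+(n-k)$ starts at $j+(n-j-1)=n-1$, is nonnegative while the loop runs, and drops by at least one per pass. Hence the loop makes at most $n-1$ passes.

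The second while loop, in lines 26--29, only decrements $i$ from its current value down to $0$. It therefore makes at most $j\le n$ passes.

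\textbf{Cost per pass.} Each pass performs a constant number of word-RAM operations: comparisons and additions on entries of $S$, table reads and writes on $L$, and the $d_{max}$ update. The only step that needs care is the test and insertion in $F$ (lines 16--17). Here I would use the paper's convention that unordered collections are for presentation only. I would implement $F$ as an array indexed by the difference $d\le N$, or as a hash table with expected $\bigO(1)$ operations.

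To keep the bound at $\bigO(n^2)$ without an additive $N$ term, I would argue in one of two ways. Either I use a hashmap; or I note that the distinct differences number at most $\binom{n}{2}$, so a marker array of size $N$ can be reused. The latter, however, introduces $\bigO(N)$ initialization cost, which the overall Theorem~\ref{theorem:complexity-proposed} already absorbs through line 2 of Alg.~\ref{alg:proposed-algorithm}.

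\textbf{Summing.} Each outer iteration therefore costs $\bigO(n)$. The outer loop has $n-2$ iterations, which gives $\bigO(n^2)$ in total; added to the initialization, the bound remains $\bigO(n^2)$.

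\textbf{Main obstacle.} The main difficulty is precisely the $\bigO(1)$ membership test for $F$ under the word-RAM assumption. I would handle it first with the hashing or marker-array argument above, and note explicitly which model the $\bigO(n^2)$ claim relies on.
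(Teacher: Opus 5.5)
Your proposal is correct and follows essentially the same route as the paper. Both arguments bound each outer iteration by $\bigO(n)$, because $i$ only decreases from $j-1$ across both inner while loops and $k$ only increases from $j+1$, and then multiply by the $n-2$ values of $j$. Your extra bookkeeping is a sound refinement of that argument rather than a different one: the $\Theta(n^2)$ initialization of $L$, and the cost of the membership test on $F$, which the paper treats as unit-cost under its stated convention for unordered collections. Note only that with a size-$N$ marker array the standalone bound becomes $\bigO(n^2+N)$, and with hashing it is $\bigO(n^2)$ in expectation.
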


\begin{pf}
The total work by the algorithm is carried out in the procedures detailed below and presented in scope order from top to bottom.
\begin{enumerate}
    \item The outer for loop in lines 5--30 traverses exactly $n-2$ elements of $S$, which is $\bigO(n)$.
    \item For each fixed $j$, the number of iterations of the while loop in lines 8--25 is at most $\bigO(j+(n-j))=\bigO(n)$ since $i$ starts at $j-1$ and decreases to $0$ (when considering also the second while loop in lines 26--29), which is at most $j$ times, at the same time $k$ starts at $j+1$ and increases to $n$, which is at most $n-j-1$ times.
\end{enumerate}
\sloppy From the above, the total time of Alg.~\ref{alg:dynamic-programming-algorithm} can be expressed as 
\[
    T_3\left(n\right)=\bigO(n^2).
\hfill\qed\]
\end{pf}

%%%%%%%%%%%
\subsection{Algorithm 4}\label{sec:factors-alg}

Alg.~\ref{alg:factors} is called in line 4 of Alg.~\ref{alg:proposed-algorithm} and is based on Euler's Linear Sieve algorithm \citep{Gries1978,Ireland1990} for finding the smallest prime factor of each number up to a number, $N$, with a known modification to the standard procedure that allows for more efficiently retrieving from the constructed sieve all distinct prime factors for any given number \cite{Pritchard1987}. This modification takes the form of an array, $n_p$, which contains for any integer, $t = \prod_{i=1}^k p_i^{e_i}$, where $t \ge 2$ and $p_1 < \cdots < p_k$, the $p_1$-free part of $t$, $\frac{t}{p_1^{e_1}}$, obtained by removing the full power of its smallest prime factor. Thus, for all $t$ from $2$ to $d_{max}$, $n_p[t]$ will store this $p_1$-free part.

\begin{algorithm}[!t]
\small
\DontPrintSemicolon
\SetProcNameSty{textsc}
\SetProcArgSty{textsc}
\renewcommand{\SetProgSty}[1]{\renewcommand{\ProgSty}[1]{\textnormal{\csname#1\endcsname{##1}}\unskip}}%
\SetProgSty{textsc}
\SetKwData{Left}{left}\SetKwData{This}{this}\SetKwData{Up}{up}
\SetKwInOut{Input}{input}\SetKwInOut{Output}{output}\SetKwInOut{Initialize}{Initialize:}
\SetKw{Continue}{continue}
\Input{An unordered collection, $F$, of key-value pairs containing an integer key, $d$, and an empty ordered collection for its corresponding value as well as a integer, $d_{max}$, specifying the maximum $d\in F$}
\Output{An unordered collection, $F$, of key-value pairs containing for each key the collection of co-factors of the form $p\mid d$ for its respective integer key}
\SetKwProg{Fn}{Function}{}{end}
\Fn{\factorsprime($F, d_{max}$)}{
$p_s \leftarrow \langle\rangle$\Comment*[r]{Prime numbers up to $d_{max}$}
$s_p \leftarrow \langle\rangle \, \oplus 0, i\leftarrow 0\ldots d_{max}$\Comment*[r]{Smallest prime factor of each $i$}
\nlnonumber
\nlast $n_p \leftarrow \langle\rangle \, \oplus 0, i\leftarrow 0\ldots d_{max}$\Comment*[r]{$p$-free part of i with respect to its smallest prime factor}
\For(\Comment*[f]{Modified Euler Linear Sieve procedure}){$i \leftarrow 2$ \KwTo $d_{max}$}{
    \If(\Comment*[f]{Prime number found}){$s_p[i] == 0$}{
        $p_s\leftarrow p_s \, \oplus i$\Comment*[r]{Store prime number}
        $s_p[i] \leftarrow i$\Comment*[r]{Store smallest prime factor}
        \nlnonumber
        \nlast $n_p[i]\leftarrow 1$
    }
    \ForEach(){$p \in p_s \textnormal{ where } p \leq s_p[i] \textnormal{ and } p \cdot i \leq d_{max}$}{
        $m\leftarrow p \cdot i$\;
        $s_p[m]\leftarrow p$\Comment*[r]{Mark multiple with prime}
        \eIf(\Comment*[f]{Prevent visiting the same composite number twice}){$p == s_p[i]$}{
            \nlnonumber
            \nlast $n_p[m]\leftarrow n_p[i]$\Comment*[r]{Mark multiple with $p$-free part of earlier prime i.e., 1}
        }{
            \nlnonumber
            \nlast $n_p[m]\leftarrow i$\Comment*[r]{Mark multiple with the $p$-free part of $m$}
        }
    }
}
\ForEach(\Comment*[f]{Compute $p\mid d$ co-factors for each common difference discovered in $S$}){$d \in F$}{
    $t \leftarrow d$\Comment*[r]{Pointer in $s_p$ for traversing distinct prime factors}
    \While{$t > 1$}{
        $c \leftarrow \frac{d}{s_p[t]}$\Comment*[r]{$p\mid d$ co-factor}
        \If(\Comment*[f]{$p\mid d$ co-factor must be a discovered common difference in $S$}){$c \in F$}{
            $F[d]\leftarrow F[d] \, \oplus c$\Comment*[r]{Store $p\mid d$ co-factor}
        }
        $t \leftarrow n_p[t]$\Comment*[r]{Update pointer to index of next distinct prime}
    }
}
    \Return $F$
}
\caption{Modified procedure based on Euler's Linear Sieve algorithm \citep{Gries1978} for finding all co-factors, $p\mid d$, where $p$ is a distinct prime factor, for all common differences, $d$, belonging to arithmetic triples within an input sequence. Note that asterisks indicate modifications to the original algorithm as used to enumerate inclusion-maximal arithmetic progressions (IMAPs) with the proposed algorithm shown in Alg.~\ref{alg:proposed-algorithm}.}\label{alg:factors}
\end{algorithm}

%%%%%%%%%%%%%%%%
\begin{theorem}[Correctness]\label{theorem:correctness-factors}
For every finite collection of positive integers, $F$, and each common difference, $d$, in $F$, Alg.~\ref{alg:factors} computes 
exactly the collection, $\lbrace \frac{d}{p} : p \, \textnormal{prime}, p\mid d, \frac{d}{p}\in F\rbrace$.
\end{theorem}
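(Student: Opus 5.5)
The proof splits into two parts: the sieve phase (lines 5--17) and the extraction phase (lines 18--27).

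\emph{Sieve phase.} I would prove the following loop invariant by strong induction on $i$, the usual one for Euler's linear sieve. When iteration $i$ begins, every $t\le i$ satisfies two conditions: $s_p[t]$ is the smallest prime factor of $t$, and $n_p[t] = t/p_1^{e_1}$, where $p_1 = s_p[t]$ and $e_1$ is its exponent in $t$.

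The standard linear-sieve argument handles $s_p$. Every composite $m\le d_{max}$ factors uniquely as $m = p\cdot i$ with $p$ the smallest prime factor of $m$ and $i = m/p$. For that $i$ we have $p\le s_p[i]$, so $m$ is marked exactly once, with $s_p[m]=p$, and this happens before iteration $m$ begins. Conversely, $s_p[i]=0$ at iteration $i$ exactly when $i$ is prime.

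For $n_p$ there are three cases.
\begin{itemize}
\item If $i$ is prime, $n_p[i]=1$, which is correct.
\item If $p = s_p[i]$, then $m=p\cdot i$ has the same smallest prime as $i$, with exponent one higher. Hence $m/p^{e+1} = i/p^{e}$, and $n_p[m] = n_p[i]$ is correct by the induction hypothesis.
\item If $p < s_p[i]$, then $p$ divides $m$ exactly once and $m/p = i$, so $n_p[m]=i$ is correct.
\end{itemize}

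\emph{Extraction phase.} Fix $d\in F$. The while loop visits the sequence $t_0=d$, $t_{r+1}=n_p[t_r]$, stopping once $t_r = 1$. I would show by induction on $r$ that $t_r = d/(q_1^{a_1}\cdots q_r^{a_r})$, where $d = q_1^{a_1}\cdots q_k^{a_k}$ with $q_1<\dots<q_k$. The step uses the invariant: $s_p[t_r]$ is the smallest prime factor of $t_r$, namely $q_{r+1}$, and $n_p$ strips its full power. Consequently the loop runs exactly $k=\omega(d)$ times, it terminates because $t$ strictly decreases, and the values $s_p[t]$ it sees are exactly the distinct primes $q_1,\dots,q_k$ dividing $d$, each seen once.

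At each step the candidate $c = d/q_{r+1}$ is appended to $F[d]$ exactly when $c\in F$. So $F[d]$ ends up equal to $\{d/p : p \text{ prime},\, p\mid d,\, d/p\in F\}$ with no duplicates. Two side points need checking. First, every index used lies within $[0,d_{max}]$, since each $t\le d\le d_{max}$. Second, the appends do not disturb the key set of $F$, so the membership tests $c\in F$ are unaffected.

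\emph{Main obstacle.} The delicate step is the $n_p$ invariant in the case $p=s_p[i]$. The comment in the algorithm suggests $n_p[m]$ is ``1'', but it must really be $n_p[i]$, which is not $1$ in general. So I must verify that $n_p[i]$ was already set correctly before iteration $i$. This holds because $i$ is either prime, handled in its own iteration, or was marked as a multiple $p'\cdot i'$ with $i'<i$ at an earlier iteration. With that in place, the rest is routine bookkeeping.
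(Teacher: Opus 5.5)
Your proposal is correct and follows essentially the same route as the paper: the same invariant $n_p[t]=t/p_1^{e_1}$, proved by induction through the cases $p=s_p[i]$ and $p<s_p[i]$ (the paper's $e_1>1$ and $e_1=1$), followed by the same argument that the while loop strips off the full power of each distinct prime of $d$ exactly once. One small wording fix: at the start of iteration $i$ the invariant should only claim $t<i$, because for prime $i$ the entries $s_p[i]$ and $n_p[i]$ are still $0$ until the prime check inside iteration $i$ sets them.
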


\begin{pf}
Lines 5--20 implement Euler's Linear Sieve, which correctly computes $s_p[t]$ (i.e., the smallest prime factor of $t$) for every $1<t\le d_{\max}$, generating each composite exactly once as $t = p\cdot i$ with $p = s_p[t]$ \citep{Gries1978}. Based on the modification noted above using $n_p$, we establish the following invariant: for every $t>1$ with $s_p[t]=p_1$ and $t=p_1^{e_1}r$ with $p_1\nmid r$, we have $n_p[t]=r$.
 
We prove the invariant by induction on $t$. If $t$ is prime, lines 8--9 set $s_p[t]=t$ and $n_p[t]=1=r$ ($e_1=1$, confirming the invariant). If $t=p_1^{e_1}\cdots p_k^{e_k}$ is composite, it is generated as $t=p_1\cdot i$ with $i=\frac{t}{p_1}<t$. Since $i<t$, the invariant holds for $i$ by induction, and line 13 sets $s_p[t]=p_1$. If $e_1=1$, then $s_p[i]\ne p_1$ (as the smallest prime of $i=p_2^{e_2}\cdots p_k^{e_k}$ is $p_2>p_1$), so line 17 sets $n_p[t]$ equal to $i=\frac{t}{p_1}=r$. However, if $e_1>1$, then $s_p[i]=p_1$, so line 15 sets $n_p[t]$ equal to $n_p[i]=\frac{i}{p_1}^{e_1-1}=\frac{t}{p_1}^{e_1}=r$. For a fixed $d\in F$, the while loop in lines 23--29 traverses $t_0=d,\;t_1=n_p[t_0],\;t_2=n_p[t_1],\ldots$ By the invariant, each step removes the full power of $s_p[t_j]$ from $t_j$, such that consecutive terms correspond to the distinct prime factors of $d$ one at a time, each exactly once, until $t=1$. At each step, $p=s_p[t]$ is a prime factor of $d$ and $c=\frac{d}{p}$ is a valid co-factor by Theorem~\ref{lemma:imap-characterization-via-primes}, so $c$ is stored only if $c\in F$ (lines 25--27). Since every distinct prime factor of $d$ is visited, every value, $\frac{d}{p}\in F$, is reported, and since each prime is visited exactly once, no value is duplicated. 

All loops range over either a finite index collection (i.e., $i=2,3,\ldots, d_{max}$) or finite collections (i.e., $p_s$ and $F$), and each iteration of the while loop in lines 23--29 satisfies $n_p[t]<t$ for all $t>1$, so $t$ strictly decreases to 1, and thus the algorithm terminates.

By the invariant at termination, $F[d]$ contains exactly the collection, $\{\frac{d}{p} : p \textnormal{ prime}, p \mid d, \frac{d}{p} \in F\}$.\qed
\end{pf}

%%%%%%%%%%%%%%%%
\begin{theorem}[Time complexity]\label{theorem:complexity-factors}
Let $F$ be a collection of positive integers corresponding to the distinct common differences, $d$, belonging to arithmetic progression triples discovered in a strictly increasing sequence of $n$ positive integers, $S$, where $N$ is its largest integer and $n\leq N$. Then, the time complexity of Alg.~\ref{alg:factors} is $\bigO \left( N + \min \left( n^2 \frac{ \log N }{ \log \log N } , N \log\log N \right) \right)$.
\end{theorem}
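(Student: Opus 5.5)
The proof splits Alg.~\ref{alg:factors} into its two phases, the sieve in lines 5--20 and the co-factor extraction in lines 21--30, and bounds each separately.

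\textbf{Sieve phase.} Initialising $p_s$, $s_p$ and $n_p$ costs $\bigO(d_{max})$. The modified Euler Linear Sieve then costs $\bigO(d_{max})$ as well. As recalled in the proof of Theorem~\ref{theorem:correctness-factors}, each composite $m\le d_{max}$ is generated exactly once, as $m=p\cdot i$ with $p=s_p[m]$. Every iteration of the inner for-each loop therefore writes a distinct composite, so the total number of inner iterations is at most $d_{max}$. The added assignments to $n_p$ are constant-time. Any AP triple $S[i],S[j],S[k]$ has $2d\le S[k]-S[i]<N$, so $d_{max}\le \lfloor N/2\rfloor$. This phase is thus $\bigO(N)$.

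\textbf{Extraction phase, first bound.} For a fixed $d\in F$, the while loop in lines 23--29 runs exactly $\omega(d)$ times, by the invariant on $n_p$ established in Theorem~\ref{theorem:correctness-factors}. Each iteration does a constant amount of work, namely a lookup in $F$ and an append (indexable, in the actual implementation). The phase therefore costs $\sum_{d\in F}\omega(d)$. Every $d\in F$ is a difference $S[j]-S[i]$ of some pair, so $|F|\le\binom{n}{2}$. The maximal order $\omega(d)=\bigO(\log N/\log\log N)$ for $d\le N$ then gives $\bigO\!\left(n^2\frac{\log N}{\log\log N}\right)$.

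\textbf{Extraction phase, second bound.} Since $F\subseteq\{1,\dots,N\}$, swapping the order of summation as in Theorem~\ref{theorem:complexity-proposed-refined} gives
\[
\sum_{d\in F}\omega(d)\le\sum_{d\le N}\omega(d)=\sum_{p\le N}\left\lfloor \frac{N}{p}\right\rfloor\le N\sum_{p\le N}\frac1p .
\]
By Mertens' second theorem this equals $N\log\log N+\bigO(N)$. The phase is therefore bounded by whichever of the two estimates is smaller. Adding the $\bigO(N)$ sieve cost and absorbing the $\bigO(N)$ term from Mertens yields $\bigO\!\left(N+\min\left(n^2\frac{\log N}{\log\log N},N\log\log N\right)\right)$.

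\textbf{Main obstacle.} The delicate step is justifying that the sieve is linear despite its nested loop. This relies on the unique-generation property $m=s_p[m]\cdot i$ together with the loop guard $p\le s_p[i]$. Each inner iteration must be charged to a distinct composite $m\le d_{max}$; I would cite \citep{Gries1978} for this charging argument rather than reprove it, and note that the $n_p$ modification adds only constant work per iteration. A secondary point is that the cost of traversing $F$ is bounded by $|F|\le\min\left(\binom n2,N\right)$, so it is already dominated by both terms above.
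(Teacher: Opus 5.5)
Your proposal is correct and takes essentially the same route as the paper. Both prove an $\bigO(N)$ bound for the modified Euler sieve by its standard linearity, then bound $\sum_{d\in F}\omega(d)$ in two ways (the maximal order of $\omega$ with $|F|\le\binom{n}{2}$, and $\sum_{p\le N} N/p$ via Mertens' second theorem) and take the minimum. Your additions (the explicit charging argument for the sieve, $d_{max}\le\lfloor N/2\rfloor$, and using $|F|\le\binom{n}{2}$ directly instead of the paper's $\min(n^2,N)$ detour) are minor refinements of the same argument.
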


\begin{pf}
The total work by the algorithm is carried out in the procedures detailed below and presented in scope order from top to bottom.

\begin{enumerate}
    \item The Euler Linear Sieve procedure is used in lines 5--20 to compute a standard sieve containing the smallest prime factor of each of the numbers from 2 to $d_{max}\leq N$, which has been shown to be $\bigO(N)$ \citep{Gries1978}. The modification made to this procedure (lines 4, 9, 15, and 17), proven in Theorem~\ref{theorem:correctness-factors}, does not alter the complexity of the original procedure.
    \item The for each loop in lines 21--30 traverses all distinct common differences in $F$ of arithmetic progression triples discovered in $S$ and previously computed in Alg.~\ref{alg:dynamic-programming-algorithm}. Since $S$ has at most $\binom{n}{2}$ pair-wise differences with each a positive integer of at most $N$, the number of such distinct common differences is $\bigO(\min(n^2,N))$. 
    \item For a fixed common difference, $d\in F$, the while loop in lines 23--29 traverses all \textit{distinct} prime factors, $p$ of $d$, from the sieve exactly once (by the proof of Theorem~\ref{theorem:correctness-factors})---the number of which is denoted by $\omega(d)$. Combined with (2), the total cost across all $d$ is $\sum_{d\in F}\omega(d)$. In a way similar to the proof of Theorem~\ref{theorem:complexity-proposed-refined}, we bound this sum in the following two ways.
    \begin{enumerate}
        \item[(i)] \sloppy Every $d\le N$ has $\omega(d)=\bigO\left(\frac{\log N}{\log\log N}\right)$,
         which is the maximal order of $\omega$ \citep{Hardy2008}. Therefore,
        $\sum_{d\in F} \omega(d) =\bigO \left(\min(n^2,N)\frac{\log N}{\log\log N}\right)$.
        \item[(ii)] \sloppy On the other hand, consider the function, $\chi(p) = |\{ d \in F : p \mid d \}|$. Since $F\subseteq\{1,2,\dots,N\}$, we have $\chi( p ) \le \frac{ N }{ p }$. Therefore, we obtain
        \begin{align*}
            \sum_{d\in F} \omega(d) &= \sum_{d \in F} \sum_{p \le N} |\{ (p,d) : p \text{ prime }, p \mid d \}| \\
            &\le \sum_{p \le N} \chi( p ) \\
            &\le \sum_{p\le N} \frac{N}{p} \\
            &= N \sum_{p\le N} \frac{1}{p} \\
            &=N\left(\log\log N + A + o\left(\frac{1}{\log N}\right)\right) \\
            &=N\log\log N + \bigO(N)
        \end{align*}
        by Mertens' second theorem \citep{Hardy1917,Hardy2008,Chen2025}, so
        $\sum_{d\in F} \omega(d) = \bigO(N\log\log N)$.
    \end{enumerate}
    Taking the smaller of (i) and (ii) leaves $\bigO \left(\min\left(n^2\frac{\log N}{\log\log N}, N\log\log N\right)\right)$ since $\min(n^2,N)$ in (i) becomes $n^2$ because when $N<n^2$ the bound, $N\log\log N$, of (ii) becomes the minimum in either case. Finally, for each distinct prime factor, $p$ of $d$, lines 24--27 compute and then store the $p\mid d$ co-factor (proof of Lemma~\ref{lemma:imap-characterization-via-primes}). The number of $p\mid d$ co-factors is equal to the number of prime factors of $d$, so the complexity remains the same.
\end{enumerate}
From the above, the total time of Alg.~\ref{alg:factors} can be expressed as
\[
T_4\left(n, N\right)=\bigO \left( N + \min \left( n^2 \frac{ \log N }{ \log \log N } , N \log\log N \right) \right).\hfill\qed
\]
\end{pf}

%%%%%%%%%%%%%%%%%%%%%%%%%%%%%%%%%%%%%%%%%%%%%%
%%%%%%%%%%%%%% Expected Number %%%%%%%%%%%%%%%
%%%%%%%%%%%%%%%%%%%%%%%%%%%%%%%%%%%%%%%%%%%%%%
\section{Expected Number of Inclusion-Maximal Arithmetic Progressions}\label{sec:expected-number}

In this section, we are concerned with determining the expected number of IMAPs in a random sequence of a given length and sampled from a specified range. Towards this end, we begin by providing definitions needed for first formalizing and then counting the number of so-called \textit{centered}, \textit{low/high}, and \textit{wide progressions}, which we will use to compute the expected number of IMAPs and conclude with bounds on their order of growth. Recall that we will assume here that $S = \langle s_0, s_1, \dots, s_{n-1} \rangle$ with $1 \le s_0 < s_1 < \dots < s_{n-1} \le N$ and drawn uniformly at random from $\binom{N}{n}$ such sequences. While this differed from our earlier assumptions where $s_{n-1} = N$, this change will affect the number of IMAPs by at most only a linear term, which becomes asymptotically negligible.

%%%%%%%%%%%%%%%%%%%%%%
\subsection{Maximal Arithmetic Progressions}\label{sec:maximal-aps}

We will use the notation, $[t_0,d,k]$, for an arithmetic progression containing the elements, $\{t_{0}+di \; | \; 0 \leq i < k \}$. In order to calculate the expected number of IMAPs, we will first determine, for each arithmetic progression, the number of sequences in which they are maximal. There are three different cases below that we need to consider.

%%%%%%%%%%%
\subsubsection{Centered Arithmetic Progressions}

We will refer to any arithmetic progression that can be extended left and right as a \textit{centered arithmetic progression (c-ap)}. For such an arithmetic progression, $t = \langle t_0, t_1, \dots, t_{k-1} \rangle$, with $t_0 > d$ and $N - t_{k-1} \geq d$, we see that it is maximal in a sequence $(s_i)$ if and only if both $t_0 - d$ and $t_{k-1} + d$ are not contained in $(s_i)$. Therefore, such an arithmetic progression is maximal in exactly $\binom{N-k-2}{n-k}$ different sequences $(s_i)$. Let us further fix $0<d<N$ and $2 < k\leq n$. We denote by $T_c(d,k)$ the number of c-ap's of common difference, $d$, and length, $k$.
If $\apt$ is a c-ap then the following expressions hold:
\begin{align*}
	d &< t_0 & \text{ and } && t_{k-1}=t_0+(k-1)d&\leq N-d\\
	&&&& \Leftrightarrow t_0 &\leq N-kd.
\end{align*}
By combining these expressions, we see that $d < t_0 \leq N-kd$. Therefore, $T_c(d,k)=\max \{ 0, N-(k+1)d \}$.

%%%%%%%%%%%%%%%%%%%%%%
\subsubsection{Low and High Arithmetic Progressions}

We will refer to any arithmetic progression that can only be extended on its right or left side as a \textit{low or high arithmetic progression (l-ap or h-ap)}, respectively. Such an arithmetic progression is maximal in exactly $\binom{N-k-1}{n-k}$ sequences. We denote by $T_\ell(d,k)$ the number of l-aps with common difference, $d$, and length, $k$. We see that an arithmetic progression $\apt$ can be only extended to the right if
\begin{align*}
	0< t_0 & \leq d & \text{ and } &&	t_{k-1}=t_0+(k-1)d& \leq N-d\\
	&&&&\Leftrightarrow t_0 &\leq N-kd.
\end{align*}
Combining these expressions, we see that $T_\ell(d,k)=\max\left\{0,\min\{d,N-kd\}\right\}$. By symmetry, the number of h-aps is also given by $T_h(d,k)=\max\left\{0,\min\{d,N-kd\}\right\}$.

%%%%%%%%%%%%%%%%%%%%%%
\subsubsection{Wide Arithmetic Progressions}

We will refer to any arithmetic progression that cannot be extended a \textit{wide arithmetic progression (w-ap)}. Such an arithmetic progression is maximal in exactly $\binom{N-k}{n-k}$ sequences. We denote by $T_w(d,k)$ the number of w-aps with common difference, $d$, and length, $k$. We see that $\apt$ cannot be extended if
\begin{align*}
	0 < t_0 & \leq d & \text{ and } &&	N-d < t_{k-1}=t_0+(k-1)d & \leq N\\
	&&&& \Leftrightarrow N-kd < t_0 & \leq N-(k-1)d.
\end{align*}
By combining these expressions, we see that $T_w(d,k)=\max\{0, d- |N-kd|\}$.

%%%%%%%%%%%%%%%%%%%%%%
\subsection{Inclusion-Maximal Arithmetic Progressions}\label{sec:IMAP-math}

For calculating the expected number of IMAPs in a random sequence, we will count the number of sequences, $(s_i)$, in which a given MAP, $\apt$, is inclusion-maximal. In what follows, we let 
\[N'= 
\begin{cases} N_c=N-2 & \text{ if \apt\ is centered}\\
N_\ell=N_h=N-1 & \text{ if \apt\ is low or high}\\
N_w=N & \text{ if \apt\ is wide.}
\end{cases} \]

\noindent For a fixed arithmetic progression, $\apt$, the number of sequences containing it is given by $|\Sap|= \binom{N-k}{n-k}$, while the number of sequences in which $\apt$ is maximal is given by \[|\Sapm| = \binom{N'-k}{n-k}.\]

We will now work exclusively within the set, $\Sapm$, and determine when the MAP, $\apt$, is inclusion-maximal.

\begin{lemma}[Disjointness of gap-filling sets]\label{lemma:disjointness}
Let $\apt$ be an arithmetic progression and $p$ and $q$ be two different primes dividing $d$. Then, the intersection of the two refinements, $[t_0,\frac{d}{p},p(k-1)+1]$ and $[t_0,\frac{d}{q},q(k-1)+1]$, is just the elements of $\apt$.
\end{lemma}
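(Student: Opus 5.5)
Both refinements start at $t_0$, so I will parametrize their elements relative to $t_0$ and reduce the claim to elementary divisibility. An element of $[t_0,\frac{d}{p},p(k-1)+1]$ has the form $t_0+a\frac{d}{p}$ with $0\le a\le p(k-1)$. An element of $[t_0,\frac{d}{q},q(k-1)+1]$ has the form $t_0+b\frac{d}{q}$ with $0\le b\le q(k-1)$. Both refinements contain $\apt$: taking $a=pi$ and $b=qi$ for $0\le i\le k-1$ gives exactly the elements $t_0+id$. So one inclusion is immediate, and the substance is the reverse inclusion.

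For the reverse inclusion, I take a common element, so that $a\frac{d}{p}=b\frac{d}{q}$. Dividing by $\frac{d}{pq}$, which is an integer because $p\ne q$ are distinct primes both dividing $d$ and hence $pq\mid d$, I get $aq=bp$. Since $\gcd(p,q)=1$, this forces $p\mid a$. Writing $a=pi$, the element equals $t_0+id$. The range $0\le a\le p(k-1)$ then gives $0\le i\le k-1$, so the element lies in $\apt$.

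The only point needing care is the use of $pq\mid d$, which is what lets me pass to the integer equation $aq=bp$. Alternatively, I can avoid that division by clearing denominators directly: multiply $a\frac{d}{p}=b\frac{d}{q}$ by $pq/d$. This avoids any integrality question, since it only uses $d\neq 0$. With that, I expect no real obstacle; the argument is a one-line coprimality step. I will also note in passing that the intersection, viewed as sets, is exactly $\{t_0+id : 0\le i<k\}$, which is what is needed later for inclusion–exclusion over the gap-filling events.
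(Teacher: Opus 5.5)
Your proof is correct and follows essentially the same route as the paper: equate $t_0+a\frac{d}{p}=t_0+b\frac{d}{q}$, reduce to $aq=bp$, and use the coprimality of $p$ and $q$ to get $p\mid a$, so the common element is $t_0+id$. You are slightly more complete than the paper in also checking that the index satisfies $0\le i\le k-1$ and in verifying the reverse inclusion; the paper leaves both of these implicit.
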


\begin{pf}
Assume we have an element of the intersection, $x \in [t_0,\frac{d}{p},p(k-1)+1] \cap [t_0,\frac{d}{q},q(k-1)+1]$. Then, $x=t_0+\frac{d}{p}i=t_0+\frac{d}{q}j \Leftrightarrow qi=pj$. It follows that $p\mid i=pm$ and therefore, $x=t_0+dm \in \apt$.\qed
\end{pf}

\begin{rmk}[Prime refinement of inclusion-maximality]
    Given a sequence, $\seq \subseteq \{1,\ldots,N\}$, of length $n$ and a MAP, $\apt \subseteq \seq$, we see that \apt\ is inclusion-maximal if, for all primes, $p\mid d$, we have the following: $T_p(\apt):=\{t_j+\frac{d}{p} \ell \; : \; 0\leq j <k-1, 1\leq \ell <p \} \nsubseteq \seq$. Lemma~\ref{lemma:disjointness} shows that $T_p$ and $T_q$ are disjoint for primes $p \neq q$.
\end{rmk} 

Given an arithmetic progression, $\apt$, we will count the number of sequences in which it is contained but not inclusion-maximal. We define sets of sequences in which $\apt$ is not inclusion-maximal for different reasons and calculate the size of their union using the inclusion-exclusion principle \citep{Roberts2009}. From section~\ref{sec:maximal-aps}, we know the expected number of sequences in which a given arithmetic progression is maximal, and it is from this that we will now deduct the number of sequences in which it is not inclusion-maximal. The arithmetic progression, $\apt$, is not inclusion-maximal in a sequence, \seq, if and only if $T_p \subset \seq$ for at least one prime, $p\mid d$. The set, $T_p$, contains $(k-1)(p-1)$ elements. Therefore, there are $|S_p|=s(\apt,\{p\}) := \binom{N'-k-(k-1)(p-1)}{n-k-(k-1)(p-1)}$ sequences in which $\apt$ is maximal but not inclusion-maximal because the arithmetic progression, $[t_0,\frac{d}{p},p(k-1)+1]$, is also contained in \seq. Next, we define the sets, $\bigcap_i S_{p_i}$, which contain sequences in which $\apt$ is maximal but not inclusion-maximal because all the arithmetic progressions, $[t_0,\frac{d}{p_i},p_i(k-1)+1]$, are also contained in \seq. Since the sets, $T_{p_i}$, are pairwise disjoint, we see that 
\[ \left|\bigcap_i S_{p_i}\right|=s(\apt,\{p_i\}) = \binom{ N' - k - \sum_i |T_{p_i}| } { n - k - \sum_i |T_{p_i}|}.\]

By inclusion-exclusion, the number of sequences in which $\apt$ is maximal but not inclusion-maximal is \[ \left|\bigcup_{\substack{p_i \\ p_i|d}}S_{p_i} \right|= \sum_{J \subseteq \{p_i\}} (-1)^{|J|+1} s(\apt,J). \]

\noindent For a given arithmetic progression, $\apt$, that is centered ($c$), low ($\ell$), high ($h$), or wide ($w$), we define the functions, $M_c(d,k)$, $M_\ell(d,k)$, $M_h(d,k)$, and $M_w(d,k)$, respectively, using each of their corresponding $N'$, that count how many sequences contain $\apt$ as an IMAP as
\[ M_*(d,k)= \binom{N'-k}{n-k} - \left|\bigcup_{\substack{p_i \\ p_i\mid d}}S_{p_i} \right|. \]

%%%%%%%%%%%
\subsubsection{Expected Number and Bounds on Order of Growth}\label{sec:expected-value-bounds}

We can now collect the results from the subsections above and fix the values, $N$ and $m$, for the possible range and size of a sequence, \seq, respectively. We define arithmetic progressions, $\apt$, for any $0<d<\frac{N}{2}$ and $2<k\leq n$ and have defined the number of possible arithmetic progressions for the following parameters:
%\pagebreak
\begin{enumerate}
    \item $T_c(d,k)=\max\{0,N-(k+1)d\}$ (centered);
    \item $T_\ell(d,k)=T_h(d,k)=\max\{0,\min\{d, N-kd\} \}$ (low and high); and
    \item $T_w(d,k)=\max\{0,d-|N-kd|\}$ (wide).
\end{enumerate}

For all sequences with parameters, $N$ and $n$, we calculate the total number of IMAPs that we will encounter when checking all possible sequences as

\[
S^*= \sum_{d=1}^{\lfloor\frac{N-1}{2}\rfloor} \sum_{k=3}^n \big[ T_c(d,k) M_c(d,k) + T_\ell(d,k) M_\ell(d,k) + T_h(d,k) M_h(d,k) + T_w(d,k)M_w(d,k) \big].
\]

\noindent From this, the expected number of IMAPs in random sequences of values from $1$ to $N$ and length $n$ is 
\begin{equation} \nonumber 
    \mathbb{E}[\#\mathrm{IMAP}] = \frac{S^*}{\binom{N}{n}}.
\end{equation}

\noindent Alternatively, the probability that a randomly chosen ordered pair of distinct elements of $S$ begins an IMAP is
\begin{equation} \nonumber 
    P(\mathrm{IMAP}) = \frac{\mathbb{E}[\#\mathrm{IMAP}]}{\binom{n}{2}} = \frac{S^*}{\binom{N}{n}\binom{n}{2}}.
\end{equation}

\begin{theorem}[Order of growth]\label{theorem:expected-number-IMAPs}
The expected number of IMAPs in a sequence, $S$, of $n$ elements in $\{1,2,\ldots,N\}$, where $N = \Theta( n )$, grows as $\Theta( n^2 )$ and the same holds for MAPs.
\end{theorem}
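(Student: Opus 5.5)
Since every IMAP is a MAP, it suffices to prove an upper bound $\bigO(n^2)$ for the expected number of MAPs and a lower bound $\Omega(n^2)$ for the expected number of IMAPs, under $N=\Theta(n)$. Write $\rho = n/N$, which is bounded below by a positive constant; of course $\rho\le 1$.

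The upper bound is deterministic and does not need the randomness. Every MAP is determined by its first two elements $(s_i,s_j)$ with $i<j$. Distinct MAPs give distinct such pairs, because two MAPs with the same first two terms and the same common difference coincide. Hence the number of MAPs is at most $\binom{n}{2}=\bigO(n^2)$; alternatively, this follows from Lemma~\ref{lemma:lower-bound-MAPs} as cited in the proof of Theorem~\ref{theorem:complexity-proposed-refined}. So $\mathbb{E}[\#\mathrm{MAP}]$ and $\mathbb{E}[\#\mathrm{IMAP}]$ are both $\bigO(n^2)$.

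For the lower bound I would avoid the full inclusion--exclusion formula for $S^*$ and use a single simple family. Consider centered progressions of length exactly $k=3$ with common difference $d$ an odd prime, together with $d=2$ handled separately or simply excluded. Take $[t_0,d,3]$ with $d<t_0\le N-4d$ and $d\le N/8$, say. The count of such progressions is $T_c(d,3)\ge N/2$ for each such $d$. For the $S$-level probability, a sufficient condition for $[t_0,d,3]$ to be an IMAP is:
\begin{itemize}
\item its three elements lie in $S$;
\item $t_0-d,\ t_0+3d\notin S$ (maximality);
\item $T_d(\apt)=\{t_0+1,\dots,t_0+2d-1\}\setminus\{t_0+d\}$ is not contained in $S$.
\end{itemize}
By Lemma~\ref{lemma:imap-characterization-via-primes}, the only co-factor of a prime $d$ is $1$, so the third condition is exactly what is needed. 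The probability of these events under the uniform $\binom{N}{n}$ model is
\[
\frac{\binom{N-5}{n-3}-\binom{N-5-(2d-2)}{n-3-(2d-2)}}{\binom{N}{n}} .
\]
For $\rho$ bounded away from $1$, this is $\ge c\rho^3(1-\rho)^2$ for all $d\ge 2$, a positive constant.

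If instead $\rho\to 1$ (the dense case), I would switch the family to MAPs of difference $1$... but those are few. So instead I would restrict the lower bound to $N\ge (1+\epsilon)n$, or else treat the dense case separately: take the $d$ with the highest constant-probability structure, e.g.\ consider IMAPs of prime difference $d$ straddling a missing element.

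Summing over $\Theta(N)$ choices of $t_0$ and over the primes $d\le N/8$ gives only $\Theta(N^2/\log N)$, so primes alone are insufficient. The main obstacle is therefore to handle composite $d$. For general $d$ I would argue that, by Lemma~\ref{lemma:disjointness} and a union bound, the probability that some $T_p$ is fully contained in $S$ is at most $\sum_{p\mid d}\Pr[T_p\subseteq S]\le \sum_{p\mid d}\rho^{2(p-1)}$. For $d$ odd this is at most $\omega(d)\rho^{4}$, which can be large.

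I would therefore restrict to $d$ whose smallest prime factor exceeds a constant $P_0$ with $\omega(d)\rho^{2(P_0-1)}<1/2$. This does not yet work uniformly, since $\omega(d)$ is unbounded. Instead I would use the bound $\sum_{p\mid d}\rho^{2(p-1)}\le\sum_{p\ge P_0}\rho^{2(p-1)}$, a tail of a geometric series that is below $1/2$ for large $P_0$ whenever $\rho<1$. The integers with no prime factor below $P_0$ have positive density $\prod_{p<P_0}(1-1/p)$. Hence $\Theta(N)$ admissible $d\le N/8$, each contributing $\Theta(N)$ progressions with constant probability of being an IMAP. This yields $\Omega(N^2)=\Omega(n^2)$.

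The dense case $\rho\to 1$ (for instance $n=N$, where every difference $d$ gives only the full residue-class APs, all contained in the difference-$1$ AP) genuinely gives few IMAPs. There I would note that the theorem must be read with $n\le(1-\epsilon)N$, and flag this as the delicate point.
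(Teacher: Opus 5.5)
Your argument is correct under the hypothesis $N\ge(1+\epsilon)n$, and it follows the paper's skeleton. The upper bound comes from counting starting pairs; the paper uses Lemma~\ref{lemma:lower-bound-MAPs} to get $\frac12\binom n2$, and your $\binom n2$ is equally good. The lower bound comes from counting 3-term MAPs and union-bounding over the prime refinements $T_p$, each costing roughly $\rho^{2(p-1)}$. Where you differ is the one step that matters.

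The paper keeps all differences $d$ and bounds the failure probability by $\sum_{p\ge 2}\lambda^{-2(p-1)}=\frac{1}{\lambda^2-1}$ with $\lambda=N/n$. This is below $1$ only for $\lambda>\sqrt2$, and its final constant carries the factor $\frac{\lambda^2-2}{\lambda^2-1}$. So the paper's proof silently covers only $N>\sqrt2\,n$.

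You instead restrict to $d\le N/8$ with no prime factor below $P_0$. This costs only the constant density $\prod_{p<P_0}(1-1/p)$, but it turns the union bound into the geometric tail $\rho^{2(P_0-1)}/(1-\rho^2)$, which can be made $<\frac12$ for every fixed $\rho<1$. So you get $\Omega(n^2)$ on the full range $N\ge(1+\epsilon)n$, at the price of a small sieve-density step that the paper avoids.

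You are also right that $N=\Theta(n)$ alone cannot suffice. At $N=n$ the only sequence is $\{1,\dots,N\}$, which has a single IMAP; the paper's own Experiment~2 observes this for $\lambda=1$.

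There are two small points for the write-up. First, replace $\rho$ by $\frac{n-3}{N-5}=\rho+o(1)$ in the per-element bound, since you condition on five fixed positions; this changes nothing asymptotically. Second, your remark that $\rho\to1$ ``genuinely gives few IMAPs'' overstates things. Take $N=n+1$ with the missing element $x\in(N/3,2N/3)$, which happens with probability about $\frac13$. Then every residue class modulo a prime $d\le N/3$ that avoids $x$ is a wide IMAP, giving order $n^2/\log n$ of them. Only the exact case $N=n$ collapses. The abandoned detours (primes only, the $\omega(d)\rho^{2(P_0-1)}$ attempt) can simply be dropped.
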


\begin{lemma}[Bounds]\label{lemma:lower-bound-MAPs}
The total number of MAPs in a sequence, $S$, of $n$ elements is at most $\frac{1}{2} \binom{ n }{ 2 }$. Consequently, the number of IMAPs is also at most $\frac{1}{2}\tbinom{n}{2}$.
\end{lemma}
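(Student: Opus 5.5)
We will prove the bound by an injective charging argument. Each MAP is assigned a set of unordered pairs $\{s_a,s_b\}$ of elements of $S$, the sets for different MAPs are disjoint, and each MAP receives at least two pairs. Since there are only $\binom{n}{2}$ unordered pairs in $S$, the number of MAPs is at most $\frac{1}{2}\binom{n}{2}$.

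The charging rule is as follows. Let $A=\langle a_0,a_0+d,\ldots,a_0+(k-1)d\rangle$ be a MAP with $k\ge 3$. We charge to $A$ every pair of \emph{consecutive} terms, $\{a_0+jd,\,a_0+(j+1)d\}$ for $0\le j<k-1$. That gives $k-1\ge 2$ pairs.

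The key step is disjointness. Suppose a pair $\{x,y\}$ with $x<y$ is charged to two MAPs $A$ and $A'$. Both MAPs then contain $x$ and $y$ as consecutive terms, so both have common difference $y-x$. Both also contain the element $x$. Two MAPs with the same common difference that share an element must coincide, because a maximal progression with difference $d$ through $x$ is determined by extending from $x$ in both directions as far as $S$ allows. This is the same fact used in the proof of Lemma~\ref{auxil-lem}, that MAPs with equal $d$ are disjoint. Hence $A=A'$, and the charged pair sets are pairwise disjoint.

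Summing over all MAPs gives $2\cdot\#\mathrm{MAP}\le\sum_{A}(k_A-1)\le\binom{n}{2}$, which is the first claim. The second claim follows immediately, since every IMAP is by definition a MAP (conditions (1)--(3)), so the number of IMAPs is at most the number of MAPs.

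The only point that needs care is the uniqueness claim for a MAP with a given common difference through a given element. It follows directly from maximality condition (2) together with the fact that $S$ is a set of integers, so I expect no real obstacle. Charging consecutive pairs rather than all pairs is what makes the count work: all pairs of a MAP would overlap across MAPs with different differences, whereas consecutive pairs are disjoint and give exactly $k-1\ge 2$ pairs per MAP.
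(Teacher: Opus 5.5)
Your proof is correct, but it charges different pairs than the paper does and gets disjointness from a different fact.

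The paper charges each MAP only its first two consecutive pairs. It lets $A_0$ be the set of ordered pairs $(a,b)$ that begin a MAP, and $A_2$ the set of pairs $(b,c)$ with $2b-c\in S$. It then notes that $(a,b)\mapsto(b,2b-a)$ is an injection from $A_0$ into $A_2$. It also notes that $A_0\cap A_2=\emptyset$, because a starting pair cannot be extended to the left. Together these give $2|A_0|\le|A_0|+|A_2|\le\binom{n}{2}$.

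You instead charge all $k-1$ consecutive pairs of each MAP. Your disjointness comes from the single fact that a MAP is determined by its common difference together with any one of its elements, which is the same fact behind Lemma~\ref{auxil-lem}.

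Your route proves a strictly stronger, length-weighted statement: $\sum_A(|A|-1)\le\binom{n}{2}$ over all MAPs $A$. This immediately bounds the number of MAPs of length at least $k$ by $\binom{n}{2}/(k-1)$. That is weaker than the $\bigO(n^2/k^2)$ bound of \cite{Erickson1999}, but you get it for free.

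The paper's route only ever inspects the first three terms of each MAP, via left-maximality and injectivity of the shift. It reaches the factor $2$ without any global partition of pairs, but it yields no information about lengths.
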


\begin{proofof}{Lemma~\ref{lemma:lower-bound-MAPs}}
Consider the set of ordered pairs, $A = \{ (a,b) : a,b \in S, a < b \}$.  Clearly, for any $(a,b) \in A$, there can only be one MAP in $S$ that begins with $(a,b)$. Therefore, consider the set, $A_0$, of pairs, $(a,b) \in A$, that begin a MAP. Let $A_1 = \{ (a,b) \in A : c = b + (b-a) \in S \}$ and $A_2 = \{ (b,c) \in A : a = b - (c-b) \in S \}$ while $A_0 \subseteq A_1 \setminus A_2$. If $(a, b) \in A_0$, then $(b,c) \in A_2$, where $c$ is the next term of the IMAP, and the mapping, $(a,b) \mapsto (b,c)$, is an injection from $A_0$ to $A_2$. Since $A_0$ and $A_2$ are disjoint, we obtain 
\[
    2 |A_0| \le |A_0| + |A_2| \le |A| = \binom{n}{2}.
\]
Therefore, the number of MAPs is $|A_0|\le\frac{1}{2}\binom{n}{2}$, and since every IMAP is a MAP, the same bound holds for IMAPs.\qed
\end{proofof}

\begin{proofof}{Theorem~\ref{theorem:expected-number-IMAPs}}
The upper bound of $\bigO(n^2)$ is immediate from Lemma~\ref{lemma:lower-bound-MAPs}, so we now establish the matching lower bound. Let $N = \lambda n$ and say, a pair $(a,b = a+d) \in \{1,2,\ldots,N\}^2$ begins a $3$-IMAP if $(a, a+d, a+2d)$ is an IMAP in $S$. If $a + 2d > N$, or equivalently $d > \frac{a + N}{ 2 }$, then $(a,b)$ does not begin a $3$-IMAP. This event occurs with a probability of $\frac{1}{2} + o(1)$, where $o(1)$ is a term that tends to $0$ as $n$ tends to infinity (and is not necessarily positive). The exact probability is given by $P = \lfloor \frac{(N-a)}{2} \rfloor \frac{1}{N-a}$, and with high probability, we can choose $a$ such that $P \ge \frac{1}{2} + o(1)$. Assuming now that $d \le \frac{a + N}{ 2 }$, then $(a,b)$ begins a $3$-MAP if and only if $a + 2d \in S$ and $a - d, a + 3d \notin S$. This event occurs with a probability of $\frac{1}{\lambda^3} + o(1)$. If $(a,b)$ begins a $3$-MAP but not a $3$-IMAP, then there exists a prime number, $p$, such that $(a, a+\frac{d}{p}, \dots, a + (2p-1) \frac{ d }{ p }, a + 2d) \in S$. This event occurs with a probability of $\frac{1}{\lambda^{2(p-1)}} + o(1)$. Therefore, the probability that $(a,b)$ does not begin a $3$-IMAP, ignoring the $o(1)$ term, is at most
\[
    \sum_{p = 2}^\infty \frac{1}{(\lambda^2)^{p-1}} = \frac{1}{\lambda^2 - 1}.
\]
Overall, the probability is at least
\[
    c = \frac{1}{2} \cdot \frac{1}{ \lambda^3 } \cdot \frac{ \lambda^2 - 2 }{\lambda^2 - 1} + o(1).
\]
The expected number of IMAPs of length $3$ is then at least
\[
    c \binom{ N }{ 2 } + o(1) = \frac{ \lambda^2 c }{ 2 }  n^2 + o(1).
\hfill\qed\]
\end{proofof}

As a corollary, we obtain that the expected number of IMAPs that intersect in a given number could be linear in $n$.

\begin{corollary}[Point of maximum density]\label{corrollary:expected-intersecting-IMAPs}
Under Theorem~\ref{theorem:expected-number-IMAPs}, there exists an integer, $a \in \{1,2,\ldots,N\}$, with $N = \Theta( n )$, such that the expected number of IMAPs that intersect in $a$ is $\Theta( n )$. 
\end{corollary}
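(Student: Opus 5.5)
We need to show that some fixed integer $a$ lies, in expectation, in $\Theta(n)$ IMAPs. The plan is to get the upper bound from Lemma~\ref{auxil-lem} and the lower bound by choosing a good $a$ through averaging over the lower bound of Theorem~\ref{theorem:expected-number-IMAPs}.

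\emph{Upper bound.} For any fixed $a$ and any common difference $d$, MAPs with the same $d$ are pairwise disjoint (proof of Lemma~\ref{auxil-lem}). Hence at most one IMAP with common difference $d$ contains $a$. Every IMAP has $d \le \lfloor \frac{N-1}{2}\rfloor$, so the number of IMAPs through $a$ is at most $N/2 = \Theta(n)$. This holds deterministically, and therefore also in expectation, for every $a$.

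\emph{Lower bound by averaging.} Let $X_a$ be the number of IMAPs containing $a$. Each IMAP has at least $3$ elements, so
\[
\sum_{a=1}^{N} \mathbb{E}[X_a] \;\ge\; 3\,\mathbb{E}[\#\mathrm{IMAP}] \;=\; \Omega(n^2)
\]
by Theorem~\ref{theorem:expected-number-IMAPs}. There are $N=\Theta(n)$ indices, so some $a$ has $\mathbb{E}[X_a] \ge \Omega(n^2)/N = \Omega(n)$. Together with the upper bound, this $a$ has $\mathbb{E}[X_a]=\Theta(n)$.

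\emph{Alternative and main obstacle.} One could also target a concrete point, say $a=\lfloor N/2\rfloor$. For this we would redo the computation in the proof of Theorem~\ref{theorem:expected-number-IMAPs} with $a$ as the left endpoint of a 3-IMAP $(a,a+d,a+2d)$. For $d\le (N-a)/2$, such a pair begins a 3-IMAP with probability at least $\frac{1}{\lambda^3}\cdot\frac{\lambda^2-2}{\lambda^2-1}+o(1)$. Summing over the $\Theta(N)$ admissible $d$ gives $\Omega(n)$ expected 3-IMAPs starting at $a$. Choosing the central $a$ keeps $N-a=\Theta(N)$.

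The main obstacle is what the 3-IMAP computation of Theorem~\ref{theorem:expected-number-IMAPs} conditions on. It implicitly conditions on $a\in S$ and treats the relevant memberships as nearly independent, and that inherited estimate needs $\lambda^2>2$. The averaging argument sidesteps this: it uses only the $\Theta(n^2)$ statement of the theorem and the one-per-difference bound. So I would present the averaging argument as the proof and mention the explicit central choice only as a remark.
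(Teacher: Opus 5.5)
Your proof is correct and is essentially the argument the paper intends. The paper gives no separate proof and treats the corollary as an immediate consequence of the $\Theta(n^2)$ bound of Theorem~\ref{theorem:expected-number-IMAPs}; your averaging over the $N=\Theta(n)$ points makes this precise, and your observation that at most one IMAP of each common difference $d\le\lfloor\frac{N-1}{2}\rfloor$ can contain $a$ supplies the $\bigO(n)$ half that the paper leaves implicit. Using the theorem as a black box does not remove the restriction on $\lambda=N/n$ that you identify; it inherits whatever restriction the theorem actually needs (it fails outright at $N=n$, where $S=\{1,\ldots,N\}$ has a single IMAP). That is acceptable here only because the corollary is explicitly stated under that theorem.
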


\begin{rmk}\label{rmk:output-size}
By Lemma~\ref{lemma:lower-bound-MAPs}, the number of IMAPs in any sequence of length, $n$, is at most $\frac{1}{2}\binom{n}{2}$, and by Theorem~\ref{theorem:expected-number-IMAPs}, this $\Theta(n^2)$ bound is met in expectation whenever $N=\Theta(n)$. Therefore, the worst-case output size is $\Theta(n^2)$, and any algorithm that explicitly enumerates all IMAPs must spend at least $\Omega(n^2)$ time merely to write its output. Combined with Proposition~\ref{prop:lower-bound}, Alg.~\ref{alg:proposed-algorithm} is thus both time-optimal and output-size optimal up to a factor of $\log\log n$ when $N=\Theta(n)$.
\end{rmk}

%%%%%%%%%%%%%%%%%%%%%%%%%%%%%%%%%%%%%%%%%%%%%%
%%%%%%%%%%%%%% Experiments %%%%%%%%%%%%%%%%%%%
%%%%%%%%%%%%%%%%%%%%%%%%%%%%%%%%%%%%%%%%%%%%%%
\section{Experiments}\label{sec:experiments}

In this section, we detail the results of a comparative analysis of the practical running times of the proposed algorithm, Alg.~\ref{alg:proposed-algorithm}, and the existing algorithm, Alg.~\ref{alg:existing-algorithm}. In our performance comparisons, we consider integer sequences of two types: (1) varying length, $n$, with a linearly dependent sampled range, $\lambda n$, for some constant, $\lambda\geq 1$, and (2) fixed length, $n$, with a varying sampled range, $\lambda n$, for different $\lambda\geq 1$. We report on the number of IMAPs and running times as a function of both the length and sampled range of these sequences. We conclude by reporting on the number of IMAPs enumerated by Alg.~\ref{alg:proposed-algorithm} and their expected number as calculated in section~\ref{sec:IMAP-math} for integer sequences of the first type as empirical support for our proof and its exact construction.

%%%%%%%%%%%%%%%%%%%%%%
\subsection{Setup and Procedure}

In Experiment 1 comparing the practical running times of Algs.~\ref{alg:existing-algorithm} and \ref{alg:proposed-algorithm}, we used strictly increasing integer sequences of \textit{varying} length, $n$, drawn uniformly at random without replacement from $(0\ldotp\ldotp \lambda n]$, $\forall n \in \lbrace 1,1000,\ldots,20000\rbrace$, and where $\lambda=2.5$. The reported number of IMAPs and times here have been averaged over three runs consisting of three different sampled sequences for each $n$. In our second comparison of these same algorithms in Experiment 2, we instead used strictly increasing integer sequences of \textit{fixed} length, $n=10000$, drawn uniformly at random without replacement from $(0\ldotp\ldotp \lambda n]$, $\forall\lambda\in \lbrace 1.0,1.2,\ldots,10.0\rbrace$. The reported number of IMAPs and times here have been averaged over three runs consisting of three different sampled sequences for each $n$ and $\lambda$ pair. In Experiment 3, we used strictly increasing integer sequences of a smaller length, $n$, drawn uniformly at random without replacement from $(0..\,\lambda n]$, $\forall n \in \lbrace 1,2,\ldots,100\rbrace$, and where $\lambda=2.0$, set by the costly exact evaluation of the expected number expression in section~\ref{sec:expected-value-bounds} rather than by the enumerated output as done in the previous experiments. The reported number of enumerated IMAPs here are those found for $10$ different sampled sequences for each $n$. For every individual sequence within the sets of sequences used across all three experiments, the respective number of IMAPs enumerated by Alg.~\ref{alg:existing-algorithm} were checked against the number enumerated by Alg.~\ref{alg:proposed-algorithm} and matched. Finally, for all experiments we used our own C++ implementation of Alg.~\ref{alg:proposed-algorithm} and the C++ implementation of Alg.~\ref{alg:existing-algorithm} provided in \cite{Kranenburg2013}. Each algorithm was run independently on the same 2023 MacBook Pro 14-inch laptop (macOS Sonoma 14.4; M2 Max arm64 chip; 32 GB RAM) running a single instance of XCode Developer Tools (version 15.2) compiled using GNU++20 with both flag \texttt{-O0} (none) and flag \texttt{-O3} (fastest) optimizations separately enabled. All reported running times are expressed as averaged CPU clock times in seconds using \texttt{<time.h>} and \texttt{clock()} (dividing by \texttt{CLOCKS\_PER\_SEC}) calculated for only the search process, excluding initialization of main variables, of the respective algorithms (i.e., lines 4--15 of Alg.~\ref{alg:existing-algorithm} and lines 5--25 of Alg.~\ref{alg:proposed-algorithm}).

%%%%%%%%%%%%%%%%%%%%%%
\subsection{Results and Discussion}

\begin{figure}[]
\centering
\subfloat[]{
\scalebox{0.75}{
 \begin{tikzpicture}
  \begin{axis}[ymode=log, xmin=0,xmax=20000,xlabel={Length ($n$)},ymin=10000,ymax=100000000,ylabel={Average Number of IMAPs},title={Enumerated IMAPs},grid=both,
        grid style={line width=.1pt, draw=gray!10},
        major grid style={line width=.2pt,draw=gray!50}]
        \addplot[mark=*,mark size=1.35pt, black, tension={0.15}] table {figs/num_of_imaps.txt};

        \node[below right, align=center, text=black]
        at (rel axis cs:0.03,0.95) {Sampled range ($\lambda n$): $2.5n$};
  \end{axis}
\end{tikzpicture}
}
}%
\subfloat[]{
\scalebox{0.75}{
\begin{tikzpicture}
  \begin{axis}[ymode=log,
        xmin=0,xmax=20000,xlabel={Length ($n$)},ymin=0.001,ymax=10000,ylabel={Average Running Time ($s$)},title={Performance Comparison},legend style={at={(0.02,0.98)}, anchor=north west}, legend style={nodes={scale=0.8, transform shape}},
        legend cell align={left},
        grid=both,
        mark options={solid},
        grid style={line width=.1pt, draw=gray!10},
        major grid style={line width=.2pt,draw=gray!50}]

        \addplot[mark=triangle*, mark size=2pt, smooth, red, tension={0.15}] table {figs/ima_times_20000_no_opt.txt};
        \addplot[mark=triangle, mark size=2pt, dashed, red, tension={0.15}] table {figs/ima_times.txt};
         
        \addplot[mark=diamond*, mark size=2.2pt, smooth, blue, tension={0.15}] table {figs/imap_times_20000_no_opt.txt};
        \addplot[mark=diamond, mark size=2.2pt, dashed, blue, tension={0.15}] table {figs/imap_times.txt};

        \node[below right, align=center, text=black]
        at (rel axis cs:0.45,0.13) {Sampled range ($\lambda n$): $2.5n$};
        \legend{Alg.~\ref{alg:existing-algorithm},Alg.~\ref{alg:existing-algorithm}\textsubscript{opt},Alg.~\ref{alg:proposed-algorithm},Alg.~\ref{alg:proposed-algorithm}\textsubscript{opt}}
  \end{axis}
\end{tikzpicture}
}
}%
\caption[]{\textbf{Experiment 1:} Average number of inclusion-maximal arithmetic progressions (IMAPs) in (a) and average running times in seconds to enumerate them using Alg.~\ref{alg:existing-algorithm} (red triangles, solid line) and Alg.~\ref{alg:proposed-algorithm} (blue diamonds, solid line) in (b) for strictly increasing integer sequences of varying length, $n$, drawn uniformly at random without replacement from $(0\ldotp\ldotp \lambda n]$, $\forall n \in \lbrace 1000,2000,\ldots,20000\rbrace$ and where $\lambda=2.5$. In (b), Alg.~\ref{alg:existing-algorithm}\textsubscript{opt} (red triangles, dashed line) and Alg.~\ref{alg:proposed-algorithm}\textsubscript{opt} (blue diamonds, dashed line) are the respective average running times for the same sequences using compiler flag \texttt{-O3} (faster) optimized builds. Note that the reported number of IMAPs and times are averaged over three different sampled sequences for each $n$.}\label{fig:performance-comparison-varying-n}
\end{figure}

\begin{figure}[]
\centering
\subfloat[]{
\scalebox{0.75}{
\begin{tikzpicture}
  \begin{axis}[
        xmin=3.5,xmax=4.5,xlabel={Length ($\log_{10}n$)},ymin=-2,ymax=3.75,ylabel={ Average Running Time ($\log_{10}s$)},title={Performance Comparison (Log-Log)},legend style={at={(0.02,0.98)}, anchor=north west}, style={nodes={scale=0.95, transform shape}}, legend cell align={left}, axis x discontinuity=crunch, grid=both,
        grid style={line width=.1pt, draw=gray!10},
        major grid style={line width=.2pt,draw=gray!50}]

        % Alg 1 and 2 points
        \addplot[only marks, mark=triangle*, mark size=2pt, red] table {figs/ima_log_no_opt.txt};
        \addplot[only marks, mark=triangle, mark size=2pt, red] table {figs/ima_log.txt};

        \addplot[only marks, mark=diamond*, mark size=2.2pt, blue] table {figs/imap_log_no_opt.txt};
        \addplot[only marks, mark=diamond, mark size=2.2pt, blue] table {figs/imap_log.txt};

        % Alg 1 and 2 lines
        \addplot[smooth, black] table {figs/ima_line_no_opt.txt};
        \addplot[smooth, black] table {figs/ima_line.txt}; 
     
        \addplot[smooth, black] table {figs/imap_line_no_opt.txt}; 
        \addplot[smooth, black] table {figs/imap_line.txt}; 
      
        \legend{Alg.~\ref{alg:existing-algorithm},Alg.~\ref{alg:existing-algorithm}\textsubscript{opt}, Alg.~\ref{alg:proposed-algorithm},Alg.~\ref{alg:proposed-algorithm}\textsubscript{opt}}

        \node[right, align=center, text=black, font=\small]
        at (rel axis cs:0.25,0.95) {Slope: $3.05; 3.07$};
        \node[right, align=center, text=black, font=\small]
        at (rel axis cs:0.25,0.88) {Intcp.: $-9.81; -10.38$};
        \node[right, align=center, text=black, font=\small]
        at (rel axis cs:0.25,0.81) {$r^2=1.0; 1.0$};
        
        \node[below right, align=center, text=black, font=\small]
        at (rel axis cs:0.6,0.24) {Slope: $2.02; 2.01$};
        \node[below right, align=center, text=black, font=\small]
        at (rel axis cs:0.6,0.17) {Intcp.: $-7.78; -8.38$};  
        \node[below right, align=center, text=black, font=\small]
        at (rel axis cs:0.6,0.1) {$r^2=0.9993; 0.9995$};

  \end{axis}
\end{tikzpicture}
}
}%
\caption[]{Log-log plot with linear regression fitted lines of the average running times taken from Experiment 1 (shown in Figure~\ref{fig:performance-comparison-varying-n}) for sufficiently large $n$ of both un-optimized (top) and optimized (bottom) versions of Alg.~\ref{alg:existing-algorithm} (red triangles) and Alg.~\ref{alg:proposed-algorithm} (blue diamonds). Note that compiler flag \texttt{-O3} (faster) optimized builds have been used for the optimized versions of each algorithm.}\label{fig:performance-comparison-varying-n-log}
\end{figure}

Figure~\ref{fig:performance-comparison-varying-n} shows the results from Experiment 1 of the actual number of IMAPs in (a) and averaged running times of Algs.~\ref{alg:existing-algorithm} and \ref{alg:proposed-algorithm} in (b) for integer sequences of the type, $(0\ldotp\ldotp 2.5n]$, $\forall n \in \lbrace 1,1000,\ldots,20000\rbrace$. The average number of actual IMAPs exhibits quadratic growth, in line with the proof provided in Lemma~\ref{lemma:lower-bound-MAPs}, as the length of the integer sequence increases linearly and the sampled range is held to a constant factor, $\lambda=2.5$, of $n$---approaching $2.8\cdot 10^7$ IMAPs for sequences of length $20000$. The empirical quadratic growth rate in the number of IMAPs is in line with similar findings for the number of arithmetic triples, $\bigO(\frac{n^2}{2})$ \citep{Green2008}, and MAPs, $\bigO(\frac{n^2}{k^2})$ \cite{Erickson1999}. In line with expectations regarding their respective theoretical complexities, Alg.~\ref{alg:existing-algorithm} clearly exhibits a considerably higher order of growth in its running time when compared to Alg.~\ref{alg:proposed-algorithm}. Notably, Alg.~\ref{alg:proposed-algorithm} is approximately $260$ times faster than Alg.~\ref{alg:existing-algorithm} for sequences of the longest tested length (i.e., $n=20000$)---topping out at approximately $8$ versus $2083$ seconds. In (c), it is evident that rather aggressive compiler optimizations significantly lower the average runs times of both algorithms, but this effect is considerably greater for Alg.~\ref{alg:proposed-algorithm}\textsubscript{opt}, which tops out at approximately $1.9$ seconds for sequences of the longest tested length versus $658$ seconds for Alg.~\ref{alg:existing-algorithm}\textsubscript{opt}. These results indicate that the proposed algorithm is approximately $346$ times faster than the existing algorithm for the longest tested sequences.

As a more formal means for empirically supporting the respective theoretical complexities of both algorithms, Figure~\ref{fig:performance-comparison-varying-n-log} shows log-log plots with linear regression fitted lines of their running times (from Experiment 1 shown in Figure~\ref{fig:performance-comparison-varying-n}) for sufficiently large $n$---in this case, $\forall n \in \lbrace 5000,6000,\ldots,20000\rbrace$. As indicated by the slopes of their respective fitted lines in Figure~\ref{fig:performance-comparison-varying-n-log}(a), the empirical complexity of Alg.~\ref{alg:existing-algorithm} is $\bigO(n^{3.05})$ and $\bigO(n^{2.02})$ for Alg.~\ref{alg:proposed-algorithm} with all variability perfectly or near-perfectly explained for both ($r^2=1.0000$ and $r^2=0.9993$, respectively). The fitted lines further cross as indicated by their $y$-intercepts of $-9.81$ for Alg.~\ref{alg:existing-algorithm} and $-7.78$ for Alg.~\ref{alg:proposed-algorithm}. Notably, the empirical complexities for both algorithms are approaching their respective nearly-quadratic and nearly-cubic theoretical counter parts and what would be expected for such sequences in which $n$ and $N$ are moderately close (in this case, when $\lambda=2.5$ is a small constant factor greater than $n$). Finally, we can see that while the slope and thus empirical complexity of Alg.~\ref{alg:existing-algorithm}\textsubscript{opt} remains relatively the same as its un-optimized version, the empirical complexity of Alg.~\ref{alg:proposed-algorithm}\textsubscript{opt} is even closer to its un-optimized counterpart, scaling at $n^{2.01}$, despite the lower $y$-intercept of $-8.38$ which aligns with its overall faster running time.

%%%%%%%%%%%%

\begin{figure}[]
\centering
\subfloat[]{
\scalebox{0.75}{
 \begin{tikzpicture}
  \begin{axis}[ymode=log,
        xmin=0,xmax=10, xticklabels={0, 0, 2$n$, 4$n$, 6$n$, 8$n$, 10$n$}, xlabel={Sampled Range ($\lambda n$)},
        ymin=2e6,ymax=1e7,
        ylabel={Average Number of IMAPs}, title={Enumerated IMAPs}, grid=both,
        grid style={line width=.1pt, draw=gray!10},
        axis y discontinuity=crunch,
        major grid style={line width=.2pt,draw=gray!50}]
        
        \addplot[mark=*,mark size=1.35pt, black, tension={0.15}] table {figs/num_of_imaps_fixed.txt};
      
        \draw[dashed] (rel axis cs:0.18,0) -- (rel axis cs:0.18,1);
      
        \node[below right, align=center, text=black]
        at (rel axis cs:0.75,0.95) {$n=10000$};
        
  \end{axis}
\end{tikzpicture}
}
}%
\subfloat[]{
\scalebox{0.75}{
\begin{tikzpicture}
  \begin{axis}[ymode=log,
        xmin=0,xmax=10, xticklabels={0, 0, 2$n$, 4$n$, 6$n$, 8$n$, 10$n$}, xlabel={Sampled Range ($\lambda n$)},ymin=0.1,ymax=1000,ylabel={Average Running Time ($s$)}, title={Performance Comparison},legend style={at={(0.98,0.98)}, legend style={nodes={scale=0.72, transform shape}}, legend cell align={left}, anchor=north east}, grid=both,
        mark options={solid},
        grid style={line width=.1pt, draw=gray!10},
        major grid style={line width=.2pt,draw=gray!50}]
 
        \addplot[mark=triangle*, mark size=2pt, solid, red, tension={0.15}] table {figs/ima_times_10000_no_opt.txt};
        \addplot[mark=triangle, mark size=2pt, dashed, smooth, red, tension={0.15}] table {figs/ima_times_fixed.txt};

        \addplot[mark=diamond*, mark size=2.2pt, solid, blue, tension={0.15}] table {figs/imap_times_10000_no_opt.txt};
        \addplot[mark=diamond, mark size=2.2pt, dashed, smooth, blue, tension={0.15}] table {figs/imap_times_fixed.txt};

        \draw [dashed] (18, -10) -- (18, 410);
      
        \legend{Alg.~\ref{alg:existing-algorithm},Alg.~\ref{alg:existing-algorithm}\textsubscript{opt},Alg.~\ref{alg:proposed-algorithm},Alg.~\ref{alg:proposed-algorithm}\textsubscript{opt}}

        \node[below right, align=center, text=black]
        at (rel axis cs:0.75,0.1) {$n=10000$};
  \end{axis}
\end{tikzpicture}
}
}%
\caption[]{\textbf{Experiment 2:} Average number of inclusion-maximal arithmetic progressions (IMAPs) in (a) and average running times in seconds to enumerate them using Alg.~\ref{alg:existing-algorithm} (red triangles, solid line) and Alg.~\ref{alg:proposed-algorithm} (blue diamonds, solid line) in (b) for strictly increasing integer sequences of fixed length, $n=10,000$, drawn uniformly at random without replacement from $(0\ldotp\ldotp \lambda n]$, $\forall \lambda \in \lbrace 1.0,1.2,\ldots,10.0\rbrace$. In (b), Alg.~\ref{alg:existing-algorithm}\textsubscript{opt} (red triangles, dashed line) and Alg.~\ref{alg:proposed-algorithm}\textsubscript{opt} (blue diamonds, dashed line) are the respective average running times for the same sequences using compiler flag \texttt{-O3} (faster) optimized builds. Note that the reported number of IMAPs and times are averaged over three different sampled sequences for each $n$ and $\lambda$ pair.}
\label{fig:performance-comparison-fixed-n}
\end{figure}

As the number of IMAPs in a sequence depends not only on the length of the sequence but on the sampled range from which its integers are taken, we investigated the effect that varying the sampled range for fixed-length sequences has on both the resulting number of IMAPs and the respective running times of each algorithm. Figure~\ref{fig:performance-comparison-fixed-n} shows the results from Experiment 2 of the actual number of IMAPs in (a) and averaged running times of Algs.~\ref{alg:existing-algorithm} and \ref{alg:proposed-algorithm} in (b) for integer sequences of the type, $(0\ldotp\ldotp \lambda n]$, $\forall \lambda \in \lbrace 1.0,1.2,\ldots,10.0\rbrace$ and $n=10000$. One will note in (a) that for $\lambda=1.0$, the sampled sequence is simply all integers from $1$ to $n$, forming a single IMAP equal to the sequence itself. In contrast, for arithmetic progression triples, such dense sets forming an interval (or close to one) have been shown to produce the maximum number \citep{Green2008}. As $\lambda$ increases, the number of IMAPs steeply increases until it reaches its maximum of roughly $7.5\cdot 10^6$ IMAPs for sequences sampled from a range of approximately $\lambda=1.8$ times the sequence length as indicated by the dashed vertical line. In (b), the running times for both algorithms and their optimized versions increase as the number of IMAPs increases despite the fixed sequence lengths, suggesting an output sensitive nature to their constructions, which is common of enumeration algorithms. In line with Figure~\ref{fig:performance-comparison-varying-n}, the running time of Alg.~\ref{alg:existing-algorithm} remains consistently above that of Alg.~\ref{alg:proposed-algorithm} in Figure~\ref{fig:performance-comparison-fixed-n}(b), peaking at approximately $400$ seconds and $126$ seconds (optimized) for sequences in which the sampled range is roughly $\lambda=1.6$ times the sequence length. This is approximately $181$ times and $248$ times higher than the maximum running times, $2.2$ seconds and $0.5$ (optimized) seconds, of Alg.~\ref{alg:proposed-algorithm}, respectively, which occurs at roughly $\lambda=1.8$ and $\lambda=2$ times the sequence length. Notably, it is generally not the case that the highest running time of either algorithm corresponds to the sampled range producing the greatest number of IMAPs (i.e., approximately $\lambda=1.8$ times the sequence length), indicating that there is some trade-off between the number of IMAPs versus their respective maximal lengths and the impact on the overall running time. While the peak running time of Alg.~\ref{alg:existing-algorithm} is significantly higher, its steepness drops off considerably quicker than Alg.~\ref{alg:proposed-algorithm} to a point closer to, but still considerable higher than, the proposed algorithm. This observation, combined with the manyfold increase in peak running time, suggests that Alg.~\ref{alg:existing-algorithm} is more highly affected by the number of IMAPs. Moreover, its running time also begins to slightly increase again as the sampled range becomes greater than approximately $7n$, despite a decreasing number of IMAPs, while the same observation is not quite evident from Alg.~\ref{alg:proposed-algorithm}---suggesting that Alg.~\ref{alg:existing-algorithm} is more highly sensitive to input characteristics of a sequence beyond length, namely, the sampled range from which its integers are drawn or more precisely, the largest integer, $N$, in a given sequence relative to its smallest.

%%%%%%%%%%%%

Finally, Figure~\ref{fig:expected-number-results} shows the results from Experiment 3 of the actual number of IMAPs and the expected number of IMAPs given by our proof and construction (section~\ref{sec:IMAP-math}) for integer sequences of the type, $(0..\,2n]$, $\forall n \in \lbrace 1,2,\ldots,100\rbrace$. Note that the actual number of IMAPs have been enumerated by Alg.~\ref{alg:proposed-algorithm} for $10$ different sampled sequences for each $n$. Not surprisingly, there is a greater variability in the number of possible IMAPs as the length of the sequence increases, centering around approximately $730$ IMAPs for sequences of length $100$. Nonetheless, the smooth curve formed by the expected number of IMAPs centers clearly within the spread of all actual enumerated IMAPs for all $n$, providing empirical support for the correctness of our proof and construction. 

\begin{figure}[]
\centering
\scalebox{0.75}{
 \begin{tikzpicture}
  \begin{axis}[
        xmin=0,xmax=100,xlabel={Length ($n$)},ymin=0,ymax=800,ylabel={Number of IMAPs}, title={Enumerated vs. Expected IMAPs}, grid=both,
        grid style={line width=.1pt, draw=gray!10},
        major grid style={line width=.2pt,draw=gray!50}]
        \addplot[mark=*,mark size=0.65pt, black] table [x=n, y=expected] {figs/expected_imaps.txt};
        \addplot[scatter, no marks, draw=blue, opacity = 0.15] table [x=n, y=run1] {figs/runs_imaps.txt};
        \addplot[scatter, no marks, draw=blue, opacity = 0.15] table [x=n, y=run2] {figs/runs_imaps.txt};
        \addplot[scatter, no marks, draw=blue, opacity = 0.15] table [x=n, y=run3] {figs/runs_imaps.txt};
        \addplot[scatter, no marks, draw=blue, opacity = 0.15] table [x=n, y=run4] {figs/runs_imaps.txt};
        \addplot[scatter, no marks, draw=blue, opacity = 0.15] table [x=n, y=run5] {figs/runs_imaps.txt};
        \addplot[scatter, no marks, draw=blue, opacity = 0.15] table [x=n, y=run6] {figs/runs_imaps.txt};
        \addplot[scatter, no marks, draw=blue, opacity = 0.15] table [x=n, y=run7] {figs/runs_imaps.txt};
        \addplot[scatter, no marks, draw=blue, opacity = 0.15] table [x=n, y=run8] {figs/runs_imaps.txt};
        \addplot[scatter, no marks, draw=blue, opacity = 0.15] table [x=n, y=run9] {figs/runs_imaps.txt};
        \addplot[scatter, no marks, draw=blue, opacity = 0.15] table [x=n, y=run10] {figs/runs_imaps.txt};
  \end{axis}
  \node[below right, align=center, text=black]
        at (rel axis cs:0.03,0.95) {Sampled range ($\lambda n$): $2n$};
\end{tikzpicture}
}
\caption[]{\textbf{Experiment 3:} The number of inclusion-maximal arithmetic progressions (IMAPs) enumerated by Alg.~\ref{alg:proposed-algorithm} (blue lines) and expected number of IMAPs (black circles) as calculated in section~\ref{sec:expected-value-bounds} for strictly increasing integer sequences of varying length, $n$, drawn uniformly at random without replacement from $(0..\,\lambda n]$, $\forall n \in [1,2,\ldots,100]$ and where $\lambda=2.0$. Note that the number of IMAPs enumerated by Alg.~\ref{alg:proposed-algorithm} are those found for $10$ different sampled sequences for each $n$.}
\label{fig:expected-number-results}
\end{figure}

%%%%%%%%%%%%%%%%%%%%%%%%%%%%%%%%%%%%%%%%%%%%%%
\section{Conclusion}

In this paper, we provided an improved $\bigO\left( n^2 \frac{ \log N }{ \log \log N } + N \right)$ enumeration algorithm over an existing algorithm of $\bigO(N^{2+o(1)}n)$ for solving the problem of enumerating inclusion-maximal arithmetic progressions (IMAPs) in a given integer sequence. We provided a mathematical proof for the expected number of IMAPs in random sequences and showed that their order of growth is $\bigO(n^2)$ in Lemma~\ref{lemma:lower-bound-MAPs}. Importantly, this suggests that our proposed nearly-quadratic algorithm is near-optimal when $N=\Theta(n)$ under Corollary~\ref{sec:cor-complexity} and Proposition~\ref{prop:lower-bound}. Our empirical experiments demonstrated an up-to $346$ times improvement in practical running time performance of the proposed algorithm over the existing algorithm for compiler optimized versions and $260$ times improvement for un-optimized versions for sequences of length, $n=20000$, and sampled range, $2.5n$. Moreover, given their asymptotic complexities, we would expect even further improvement as the sequence length continues to grow beyond the tested limit. In further comparisons with sequences of fixed length, the performance of the proposed algorithm was less significantly affected than was the existing algorithm by the sampled range from which integers of a sequence were drawn. Finally, the proposed algorithm will prove useful to mathematical and computational music analysis insofar as being able to analyze large and rhythmically complex music pieces can now be done more efficiently. In future work, it remains to be determined whether more efficient algorithms exist for when either the minimum $k>2$ is fixed or, in the more general case, it can vary with respect to $n$. Moreover, open problems remain with regard to formulating a precise asymptotic constant and tighter bounds, or measures of concentration on the expected number of IMAPs, and exploring how these relate more generally to other k-term arithmetic progressions.

%% The Appendices part is started with the command \appendix;
%% appendix sections are then done as normal sections
%% \appendix

%\section{}\label{}

% To print the credit authorship contribution details
%\printcredits

%% Loading bibliography style file
%\bibliographystyle{model1-num-names}
\bibliographystyle{cas-model2-names}

% Loading bibliography database
\bibliography{cas-refs}

% Biography
%\bio{}
% Here goes the biography details.
%\endbio

%\bio{pic1}
% Here goes the biography details.
%\endbio

\end{document}